    \tikzstyle{mathbox} = [inner sep=0pt, anchor=base]
    \tikzstyle{every picture}+=[remember picture]
\newenvironment{remark}[1][Remark]{\begin{trivlist}
\item[\hskip \labelsep {\bfseries #1}]}{\end{trivlist}}
\newtheorem{theorem}{Theorem}[section]
\newtheorem{corollary}{Corollary}[section]
\newtheorem{definition}{Definition}[section]
\title{Graph partitioning and a componentwise PageRank algorithm}
\author{
 Christopher Engstr{\"o}m,\\
Division of Applied Mathematics\\
Education, Culture and Communication (UKK), M{\"a}lardalen University,\\
christopher.engstrom@mdh.se\\[0.5cm] 
Sergei Silvestrov\\
Division of Applied Mathematics\\
Education, Culture and Communication (UKK), M{\"a}lardalen University\\
sergei.silvestrov@mdh.se}
\begin{document}
\maketitle

\date{}

\abstract
In this article we will present a graph partitioning algorithm which partitions a graph into two different types of components: the well-known `strongly connected components' as well as another type of components we call `connected acyclic component'. We will give an algorithm based on Tarjan's algorithm for finding strongly connected components used to find such a partitioning. We will also show that the partitioning given by the algorithm is unique and that the underlying graph can be represented as a directed acyclic graph (similar to a pure strongly connected component partitioning). 

In the second part we will show how such an partitioning of a graph can be used to calculate PageRank of a graph effectively by calculating PageRank for different components on the same `level' in parallel as well as allowing for the use of different types of PageRank algorithms for different types of components. 

To evaluate the method we have calculated PageRank on four large example graphs and compared it with a basic approach, as well as our algorithm in a serial as well as parallel implementation.

\tableofcontents

\pagestyle{plain} 

\newpage

\section{Introduction}
The PageRank algorithm was initially used by S. Brinn and L. Page to rank homepages on the Internet \cite{tAnatomy-largeWebSearch}. While the original algorithm itself is already very efficient, given the sheer size and rate of growth of many real life networks there is a need for even faster methods. Much is known of the original method using a power iteration of a modified adjacency matrix such as how the damping factor $c$ affect the condition number and convergence speed \cite{ilprints582,ilprints597}. 

Many ways have been proposed in order to improve the method such as aggregating webpages that are in some way `close' \cite{5399514} or by excluding webpages whose rank are found to already have converged from the iteration procedure \cite{Kamvar200451}. Another method to speed up the algorithm is to remove so called dangling pages (pages with no links to any other page), and then calculate their rank at the end separately \cite{FAndersson_art_PR,journals/im/LeeGZ07}. A similar method can also be used for root vertices (pages with no links from any other pages) \cite{Qing_lumping2}. 
A good overview over different methods for calculating PageRank can be found in \cite{Berkhin05asurvey}. 

The method we propose here have some similarities with the method proposed by \cite{Qing_lumping2}, the main difference being that we work on the level of graph components rather than single vertices. Another method with a similar idea is the one proposed by \cite{Arasu_et_at}, but we use two types of components as well as look at using different PageRank algorithms for different types of components and how different components can be calculated in parallell.

We presented some parts of this work at ASMDA 2015 \cite{ce_ASMDA2015} while in this paper we improve the method further as well as implementing and presenting some results using the method. 

The rest of this paper is organized as follows: In Sec. \ref{sec:graph_concepts} we define some graph concepts as well as define and prove a couple of properties of the graph partitioning which will be used later in our PageRank algorithm. In Sec. \ref{sec:PageRank} we define the variation of PageRank that will be used as well as show how PageRank can be computed for different types of vertices or graph components. In Sec. \ref{sec:method} we describe our proposed algorithm to calculate PageRank, first by describing how to efficiently find the graph partitioning described earlier and next how to use this to calculate PageRank. In the same section we also verify the linear computational complexity of the algorithm as well as give some error estimates in Sec. \ref{sec:error}. At the end of our paper in Sec. \ref{sec:experiments} we describe our implementation of the method as well as show some results using the method on a couple of different graphs.

\section{Notation and Abbreviations}
The following abbreviations will be used throughout the article
\begin{itemize}
\item SCC - strongly connected component. (see Def. \ref{def:SCC})
\item CAC - connected acyclic component. (see Def. \ref{def:CAC})
\item DAG - directed acyclic graph.
\item BFS - breadth first search.
\item DFS - depth first search.
\end{itemize}
Explanation of notation. 
\begin{itemize}
\item We denote by $G$ a graph with with vertex set $V$ and edge set $E$, by $|V|$ the number of vertices and $|E|$ the number of edges. 
\item Vectors are denoted by an arrow for example $\vec{V},~ \vec{u}$ and matrices in bold capital letters such as ${\bf A},~{\bf M}$.
\item Each vertex in a graph has an assigned level $L$ (see Def. \ref{def:Level}), $L^+$ denotes all vertices of level $L$ or greater while $L^-$ denotes all vertices of level $L$ or less. For example if ${\bf A}$ is the adjacency matrix of some graph, then ${\bf A}_{L^+,(L-1)^-}$ denotes the submatrix of ${\bf A}$ corresponding to all rows representing vertices of level $L$ or greater and all columns representing vertices of level $L-1$ or less. 
\item Different versions of PageRank will be defined where the type is denoted inside parenthesis, for example $\vec{R}^{(1)}$ (see Def. \ref{def:R1} and Def. \ref{def:R3}). 
\item Lowercase letters as subscripts denote single elements while capital letters denote a set of elements for example $w_i$ denoting the element of $\vec{W}$ corresponding to vertex $v_i$ and $\vec{W}_L$ denoting the vector of elements corresponding to all vertices of level $L$. 
\item $P(v_i\rightarrow v_j)$ denotes the probability to hit vertex $v_j$ in a random walk starting at $v_i$ (see Def. \ref{def:R3prob}). 
\end{itemize}
\section{Graph concepts}\label{sec:graph_concepts}
In this work we assume all graphs to be simple directed graphs. Since only the adjacency matrix of any graph will be used, all graphs are also assumed to be unweighted. It should be noted however that it would be fairly simple to generalize for certain types of weighted graphs (such as those representing a Markov chain).
We will start by defining two types of graph components, one being the well-known `strongly connected component' and another we call `connected acyclic component'. We also define the notion of `level' of a component representing the depth of the component in the underlying graph where components are represented by a single vertices.  
\begin{definition}\label{def:SCC}
A strongly connected component (SCC) of a directed graph $G$ is a subgraph $S$ of $G$ such that for every pair of vertices $u,v$ in $S$ there
is a directed path from $u$ to $v$ and from $v$ to $u$. In addition $S$ is maximal in the sense that adding any other set of vertices and/or 
edges from $G$ to $S$ would break this property.
\end{definition}
\begin{definition}\label{def:CAC}
A connected acyclic component (CAC) of a directed graph $G$ is a subgraph $S$ of $G$ such that no vertex in $S$ is part of any non-loop cycle in $G$
and the underlying graph is connected. Additionally any edge in $G$ that exists between any two vertices in $S$ is also a part of $S$. 
A vertex in the CAC with no edge to any other edge in the CAC we call a leaf of the CAC.    
\end{definition}
CACs can be seen as a connected collection of 1-vertex SCCs forming a tree. While CACs keep the property that all internal edges between vertices in the component are preserved from those in the original graph, it is not maximal in the sense that no more vertices could be added to the component as is the case for SCCs. The reason for this is that we want to be able to create a graph partitioning into components in which the underlying graph is a directed acyclic graph (DAG) in the same way as for the ordinary partition into SCCs. 
\begin{definition}\label{def:Level}
Consider a graph $G$ with partition $P$ into SCCs and CACs such that each vertex is part of exactly one component and the underlying graph created by replacing every component with a single vertex. If there is an edge between any two vertices between a pair of components then there is an edge in the same direction between the two vertices representing those two components as well. Consider the case where the underlying graph is a DAG (such as for the commonly known partitioning of a graph into SCCs).
\begin{itemize}
\item The level $L_C$ of component C is equal to the length of the longest path in the underlying DAG starting in C.  
\item The level $L_{v_i}$ of some vertex $v_i$ is defined as the level of the component for which $v_i$ belongs ($L_{v_i} \equiv L_C, ~\text{if } v_i \in C$). 
\end{itemize}
\end{definition}
We note that a SCC made up of only a single vertex is also a CAC, in our work it will be easier to consider these components CACs rather than SCCs. We also note that while a single vertex can only be part of a single SCC, it could be part of multiple CACs of different size. There is however a unique graph partitioning into SCCs and CACs as seen below. 
\begin{theorem}
Consider a directed graph $G$ with a partition into SCCs. Let the underlying graph be the DAG constructed by replacing every component with a single vertex. If there 
is an edge between any two vertices between a pair of components then there is an edge in the same direction between corresponding vertices in the underlying DAG. To each vertex in the underlying DAG we attach a level equal to the longest existing path from this vertex to any other vertex in the underlying graph. 
Next we start to merge SCCs consisting of a single vertex into CACs under the following conditions:
\begin{itemize}
\item We start merging from the lowest level (vertices in the DAG with no edge to any other vertex in the DAG) and only start merging on the next level when we cannot merge any more components on the current level.
\item All merges are done by merging a single `head' 1-vertex CAC of level $L$ containing vertex $v$ with all CACs of level $L-1$ to which there is an edge from $v$. Unless $v$ have an edge to at least one SCC (of more than 1 vertex) of level $L-1$ in which case no merge is made. If a merge takes place, then the level of the new merged CAC is $L-1$. 
\end{itemize}
Then the following holds:
\begin{enumerate}
\item This gives a unique partitioning of the graph into SCCs and CACs and does not depend on the order in which we apply merges of `head' components on the same level.
\item This partition of SCCs and CACs can also be seen as a DAG where we attach a level to each vertex equal to the longest existing path from this vertex to any other vertex in the DAG.
\end{enumerate}
\end{theorem}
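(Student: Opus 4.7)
The plan is to extract from the merging rule a static, order-independent description of the final partition, from which both claims will follow almost formally. Call a 1-vertex SCC $v$ \emph{mergeable} if its SCC-DAG level $L$ satisfies $L \ge 1$ and $v$ has no edge in $G$ to any multi-vertex SCC at level $L-1$; otherwise call $v$ \emph{non-mergeable} (so every level-$0$ 1-vertex SCC is non-mergeable by definition). This property depends only on $G$. Build an auxiliary undirected graph $H$ on the 1-vertex SCCs of $G$, placing an edge between $u$ and $v$ whenever $u \to v$ is an edge of $G$, $L_u = L_v + 1$, $u$ is mergeable, and $v$ is non-mergeable. My claim is that the CACs produced by the algorithm are exactly the connected components of $H$, while the multi-vertex SCCs are left as their own components; since $H$ depends only on $G$, this gives claim 1. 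To verify the claim I would induct on the processing level: when level $L$ is reached, the level-$(L-1)$ CACs still available to be merged with are precisely the non-mergeable 1-vertex SCCs at level $L-1$ (the mergeable ones were already demoted to level $L-2$ while level $L-1$ was being processed), so the merges performed at level $L$ realize exactly the $H$-edges incident to the level-$L$ heads. Order-independence within a single level is then routine: two mergeable heads sharing a level-$(L-1)$ neighbor end up in the same merged CAC regardless of which is processed first, while heads with disjoint neighbor sets merge independently.

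For claim 2 I first read off from the $H$-description a structural lemma: every CAC spans at most two consecutive SCC-DAG levels, because an $H$-edge only joins a mergeable vertex at some level $L$ to a non-mergeable vertex at level $L-1$, and no vertex can be both mergeable and non-mergeable, which rules out any three-level chain. Now assume, for contradiction, that the contracted graph contains a directed cycle $C_1 \to C_2 \to \cdots \to C_m \to C_1$. For each edge pick a witness $u_i \to w_i$ in $G$ with $u_i \in C_i$ and $w_i \in C_{i+1}$, and let $\ell_i$ denote the lower SCC-DAG level of $C_i$. Acyclicity of the SCC DAG gives $L_{[u_i]} > L_{[w_i]}$, while the structural lemma gives $L_{[u_i]} \le \ell_i + 1$ and $L_{[w_i]} \ge \ell_{i+1}$, hence $\ell_{i+1} \le \ell_i$ for every $i$. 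Summing around the cycle forces equality throughout and, by tracing where equality can hold, forces each $C_i$ to be a two-level CAC with $u_i$ a mergeable head at level $\ell + 1$ and $w_i$ a non-mergeable base at level $\ell$. But then $u_i \to w_i$ is by construction an $H$-edge, so $u_i$ and $w_i$ lie in the same connected component of $H$ and hence in the same CAC, contradicting $C_i \ne C_{i+1}$.

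The main obstacle will be the bookkeeping when setting up $H$ and the accompanying inductive characterization: correctly handling level-$0$ vertices, isolated mergeable heads whose only lower neighbors were themselves mergeable (and were therefore already demoted by the time the head gets processed), and 1-vertex CACs that never participate in any merge, so that each is identified as its own connected component of $H$. One must also consistently distinguish the algorithmic level $L-1$ attached by the rule to a newly merged CAC from the SCC-DAG levels of its constituent vertices, since both notions appear throughout the argument. Once these details are pinned down, the inductive characterization in claim 1 and the level-counting contradiction in claim 2 both go through without further surprises.
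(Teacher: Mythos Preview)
Your argument rests on two claims that are both false for the same underlying reason: you treat the level of a 1-vertex SCC as static, equal to its initial SCC-DAG level, whereas the merging procedure relevels the DAG as it goes (see the Remark immediately following the theorem). Concretely, take a directed path $a\to b\to c$ with $a,b,c$ all 1-vertex SCCs at SCC-DAG levels $2,1,0$. Processing level $1$, vertex $b$ merges with $c$ to form the CAC $\{b,c\}$ at level $0$; this drops $a$'s level from $2$ to $1$, so $a$ is still eligible at the \emph{current} level and merges with $\{b,c\}$ to form the single CAC $\{a,b,c\}$. (The paper's own Figure~\ref{fig:partition} shows exactly this phenomenon: the CAC there contains vertices at SCC-DAG levels $0,1,2$.) In your auxiliary graph $H$, however, $b$ is mergeable, so there is no $H$-edge $a$--$b$, and $H$ has the two components $\{a\}$ and $\{b,c\}$. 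Thus the CACs are \emph{not} the connected components of $H$, and your ``structural lemma'' that a CAC spans at most two consecutive SCC-DAG levels is false.

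This breaks both halves of your plan. For claim~1 the $H$-description no longer captures the partition, and the inductive step (``the level-$(L{-}1)$ CACs still available are precisely the non-mergeable 1-vertex SCCs at level $L-1$'') is exactly where the error enters: after the merges with heads originally at level $L$, some heads originally at level $L+1$ may themselves drop to level $L$ and must still be processed there before moving on. For claim~2 your level-counting contradiction depends on the structural lemma and therefore collapses. The paper's own proof of claim~2 instead argues dynamically that a merge never creates an edge in the contracted graph from a component to one of equal or higher level; you will need something of that flavor, or else a corrected static description that allows CACs of arbitrary depth (the leaves of a CAC are the non-mergeable vertices, but the interior can be an arbitrarily long chain of mergeable ones above them).
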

\begin{remark}
Note that after a merge some vertices with a level higher than the one where the merge was made might get a lower level compared to before. 
\end{remark}
\begin{proof}
That a directed graph can be partitioned into SCCs is a well-known and easy to show result from graph theory. Applying a level to the vertices in the DAG is nothing else than a topological ordering of the vertices in the graph, something also well-known, hence we start at the merging. Obviously all 1-vertex SCCs are also 1-vertex CACs since any vertex that is part of any (non-loop) cycle must be part of a SCC of more than one vertex. 

Since the head CAC of a merge is always connected with each other CAC that is part of a merge, the subgraph representing the component is connected as well. It is also still obviously acyclic since no vertex of any of the CACs is part of any cycle in $G$ from the definition of a CAC. Adding all edges between the head and all other merged CACs also ensures that there is no missing edge between any two vertices of the new CAC. This holds since there can be no edges between any two components on the same level. Hence we can conclude that merging CACs creates a new CAC. 

Next we prove statement 1) that the given partitioning is unique and does not depend on the order of merges.
CACs are created using a bottom-up approach and it is clear that the level of a CAC never change after its first merge. This means that the level of a CAC is uniquely defined by the level of any leaf in the CAC. All the leaves of a CAC are those 1-vertex CACs which could not be the head of any merge either because they have either no outgoing edges orthey have at least one outgoing edge to a SCC (of more than 1 vertex) of the next lower level.

Assume we have done all merges with head CACs of level $L$. Consider a 1-vertex CAC with vertex $v$ and level $L+1$ and edges to one or more CACs but no SCC of more than one vertex of level $L$ (or higher). From the previous argument we get that the level of any CAC linked to by $v$ will not change from any future merges. This means that eventually
$v$ will be part of the same CAC as all vertices part of any CAC with level $L$ to which there is an edge from $v$ regardless of which order merges are made on level $L+1$. 

Repeating this argument for all 1-vertex CACs of level $L+1$ we get that for each such vertex $v$ the neighboring vertices part of a CAC with a lower level
for which $v$ should be part of the same CAC is uniquely determined after all the merges on level $L$. Repeating this for all levels gives us a unique partitioning. This proves statement 1). 

Last we show statement 2) by showing that merging of components does not create any cycle in the underlying DAG created by the components. 

Consider a merge of head CAC with vertex $v$, level $L$ and an edge to each of $n$ CACs $C_1,C_2,\ldots,C_n$ with level $L-1$. Since all CACs $C_1,C_2,\ldots,C_n$ have the same level, the resulting CAC after merge can only have edges to components of level at most $L-2$ since we merge it with all CACs of level $L-1$ to which there is an edge from $v$, but do not merge if there is an edge from $v$ to any SCC of level $L-1$ and there can be no edges between any components of the same level from the initial SCC partitioning. Since initially there can be no edges from any component to another of the same or larger level and we do not create any such edges when doing a merge, we do not create any cycle in the underlying DAG. 

Since we only merge CACs, and merges does not create any cycles in the DAG, we have proved that the new partitioning can also be represented by a DAG where each vertex corresponds to one SCC or CAC. This proves statement 2). 
\end{proof}
\begin{corollary}
If a directed graph $G$ has a SCC partitioning with maximum level $L_{SCC}$ and a SCC/CAC partitioning with maximum level $L_{SCC/CAC}$, then
$$L_{SCC/CAC} \le L_{SCC}$$ 
\end{corollary}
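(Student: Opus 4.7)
The plan is to track how the component-DAG levels evolve as the CAC construction performs its merges, and to show that each merge can only weakly decrease them. Since the construction begins from the pure SCC partition, where the two DAGs coincide and the current maximum level is $L_{SCC}$, the final maximum level can only be at most $L_{SCC}$.

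First I would note that before any merges have taken place the SCC/CAC partition is exactly the SCC partition, so the two DAGs (and all their levels) are identical and the current maximum level is $L_{SCC}$.

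Next I would isolate a single merge step: a head $1$-vertex CAC $v^{*}$ of level $L$ is combined with all CACs $C_{1},\ldots,C_{n}$ of level $L-1$ that receive an edge from $v^{*}$, producing a new CAC $C'$ of level $L-1$. The local claim I would prove is that for every component $X$ of the post-merge DAG, its new level is bounded above by the largest pre-merge level among the components folded into $X$. For components untouched by the merge, I would lift any longest outgoing post-merge path to a pre-merge path of at least the same length by replacing every traversal of $C'$ with an appropriate sub-path through $v^{*}$ and/or one of the $C_{i}$'s. The delicate subcase is when the lifted path enters via $v^{*}$ and leaves via some $C_{i}$, which forces the extra edge $v^{*}\to C_{i}$ into the pre-merge path, making it one longer. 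To ensure that a post-merge path visits $C'$ at most once, I would invoke the standard fact that DAG levels strictly decrease along edges, so no edge can exist between two components of equal level (in particular no edge between distinct $C_{i}$'s), ruling out re-entry. For the new component $C'$ itself, its level $L-1$ is bounded by the pre-merge level $L$ of its contributor $v^{*}$.

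Iterating this bound across the full sequence of merges yields that the level of every final SCC/CAC component is at most the maximum SCC-DAG level of the SCCs it contains; taking a maximum over all components produces the desired inequality $L_{SCC/CAC}\le L_{SCC}$. I expect the main obstacle to be the case analysis in the path-lifting step, in particular cleanly handling how a post-merge path enters and leaves $C'$; the remainder is a routine induction on the number of merges.
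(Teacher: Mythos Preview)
Your approach matches the paper's: both argue that each merge can only weakly decrease component levels, so starting from the SCC partition (maximum level $L_{SCC}$) and iterating merges yields a maximum level at most $L_{SCC}$. The paper's proof is essentially one sentence---it asserts ``since doing a merge can never result in any vertex getting a higher level, $L_{SCC/CAC} \le L_{SCC}$'' and leans on the preceding theorem's analysis of the merge operation for why this holds. You are supplying the detailed justification the paper omits, via path-lifting.

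There is, however, one case your lifting does not cover. A post-merge path may enter $C'$ through an edge $Y\to C'$ that in the pre-merge DAG is $Y\to C_i$ (with no edge $Y\to v^{*}$), and leave $C'$ through an edge $C'\to Z$ that in the pre-merge DAG originates \emph{only} from $v^{*}$ (no $C_j\to Z$). Since there is no pre-merge edge from any $C_i$ up to $v^{*}$ (levels run the wrong way), you cannot splice the traversal of $C'$ into a pre-merge sub-path here, and the direct lift fails. The repair is short: first observe that every component of pre-merge level $\le L-2$ has its reachable subgraph entirely inside levels $\le L-2$, hence unchanged by the merge, so its level is unchanged; this immediately gives $C'$ post-merge level exactly $L-1$. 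Now, in the problematic case, instead of trying to lift the specific tail $C'\to Z\to\cdots$ (whose length is at most $L-1$), replace it by a pre-merge longest path out of $C_i$ (of length exactly $L-1$); the result is a pre-merge path from $X$ of at least the same total length. With this adjustment your induction goes through, and the difficulty lands exactly where you predicted---the entry/exit case analysis at $C'$.
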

\begin{proof}
Every merge lowers the level of the `head' vertex and possibly one or more components of a higher level, this makes it easy to find a graph such that $L_{SCC/CAC} < L_{SCC}$ such as the $2$-vertex graph with a single edge between them in one direction. Similarly we can easily find a graph for which equality holds (such as the single vertex graph). However, since doing a merge can never result in any vertex getting a higher level, $L_{SCC/CAC} \le L_{SCC}$. 
\end{proof}
For algorithms (such as PageRank) which can be done on the components of one level at a time in parallel the SCC/CAC partitioning have the advantage over the usual SCC
partitioning in that it generally creates a lower number of levels and thus a larger amount of vertices (but not necessary components) on the same level in the new partitioning on average. 
In effect reducing the chance of bottlenecks where we have a level with only a single or a few small components. The only components that get larger are the CACs which are acyclic apart from any loops and thus often have specialized faster methods (for example by exploiting the triangular adjacency matrix).  thus increasing the size of these components is often not as much of an issue and might in fact often be beneficial instead by reducing overhead.

An example of a directed acyclic graph and its SCC/CAC partitioning can be seen in Fig. \ref{fig:partition}.
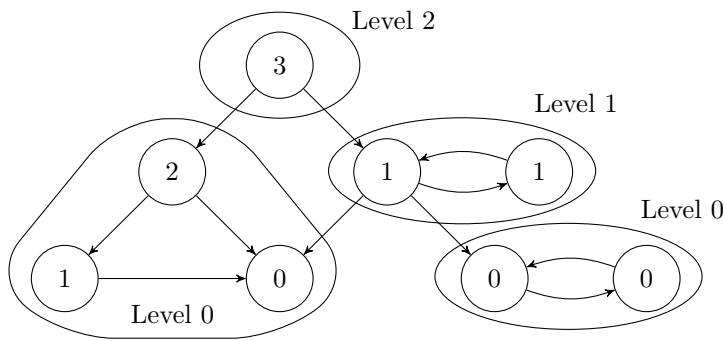
\begin{figure} [!hbt]
\captionsetup{width=0.8\textwidth}
\begin{center}
 \begin{tikzpicture}[->,shorten >=0pt,auto,node distance=2.0cm,on grid,>=stealth',
every state/.style={circle,,draw=black}]
\node[state] (A) [] {3};
\node[state] (B) [below right=of A] {1};
\node[state] (C) [below left=of A] {2};
\node[state] (D) [right=of B] {1};
\node[state] (E) [below left=of C] {1};
\node[state] (F) [below right=of C] {0};
\node[state] (G) [below right=of B] {0};
\node[state] (H) [right=of G] {0};
\node at (1.5,0.6) {Level 2};
\node at (3.9,-.5) {Level 1};
\node at (5.3,-1.9) {Level 0};
\node at (-1.41,-3.3) {Level 0};
\path (A) edge node {} (B)
(A) edge node {} (C)
(B) edge [bend right=20] node {} (D)
(C) edge node {} (E)
(C) edge node {} (F)
(B) edge node {} (F)
(E) edge node {} (F)
(G) edge [bend right = 20] node {} (H)
(D) edge [bend right = 20] node {} (B)
(H) edge [bend right = 20] node {} (G)
(B) edge node {} (G);
\draw (2.4,-1.4) ellipse (50pt and 20pt);
\draw (3.8,-2.8) ellipse (50pt and 20pt);
\draw (0,0) ellipse (30pt and 20pt);
\draw [rounded corners= 18pt] (-1.41+0.7,-1.41+0.7) -- (0+1,-2.82) -- (0,-2.82-0.8) 
-- (-2.82,-2.82-0.8) -- (-2.82-1,-2.82) -- (-1.41-0.7,-1.41+0.7)--cycle;
\end{tikzpicture}
\end{center}
\caption{Example of a graph and corresponding components from SCC/CAC partitioning of the graph (2 SCCs, 1 CAC and 1 1-vertex component). Vertex labels denote the level of each vertex if we had only partitioned the graph into SCCs (for the SCC/CAC partitioning the vertex-levels is the same as the level of corresponding component in the figure).}
\label{fig:partition}
\end{figure}
From this figure it is clear why we cannot merge when the `head' have an edge to any SCC of the next lower level. If the top (level 2) component merged with the left (level 0) component then this would have created a cycle in the underlying graph. It is also possible to see how merging some components can result in the partitioning getting a lower max-level, the SCC/CAC partitioning have only 3 levels while the SCC partitioning would need 4 levels.
\subsection{PageRank}\label{sec:PageRank}
PageRank was originally defined by S. Brin and L. Page as the eigenvector to the dominant eigenvalue of a modified version of the adjacency matrix of a graph \cite{tAnatomy-largeWebSearch}.
\begin{definition}\label{def:R1}
PageRank $\vec{R}^{(1)}$ for vertices in graph $G$ consisting of $|V|$ vertices is defined as the (right) eigenvector with eigenvalue one to the matrix: 
\begin{equation}
{\bf M} = c({\bf A}+\vec{g}\vec{w}^\top)^\top+(1-c)\vec{w}\vec{e}^\top
\end{equation}
where ${\bf A}$ is the adjacency matrix weighted such that the sum over every non-zero row is equal to one (size  $|V| \times |V|$), $\vec{g}$ is a $|V| \times 1$ vector with zeros for vertices with outgoing edges and $1$ for all vertices with no outgoing edges, $\vec{w}$ is a  $|V| \times 1$ non-negative vector with $||\vec{w}||_1 = 1$, $\vec{e}$ is a one-vector with size $|V| \times 1$ and $0 < c < 1$ is a scalar. 
\end{definition}
The original normalized version of PageRank has the disadvantage in that it is harder to compare PageRank between graphs or components, because of that we use a non-normalized version of PageRank as described in for example \cite{ce_SMTDA2014}.
\begin{definition}\label{def:R3}
{\rm (\cite{ce_SMTDA2014})} PageRank $\vec{R}^{(3)}$ for graph $G$ is defined as
\begin{equation}
\vec{R}^{(3)} = \frac{\vec{R}^{(1)}||\vec{W}||_1}{d}, ~d = 1 - \sum{c{\bf A}^\top} \vec{R}^{(1)}
\end{equation}
where $\vec{W}$ is a non-negative weight vector such that $\vec{W} \propto \vec{w}$.
\end{definition}
In the same work \cite{ce_SMTDA2014} we also showed that it is possible to give another equivalent definition of this non-normalized version of PageRank which will be useful later in some proofs. 
\begin{definition}\label{def:R3prob}
{\rm (\cite{ce_SMTDA2014})} Consider a random walk on a graph $G = \{V,E \}$ described by ${\bf A}$. In each step of the random walk move to a new vertex from the current vertex by traversing a random edge from the current vertex with probability $0<c<1$ and stop the random walk with probability $1-c$. Then PageRank $\vec{R}^{(3)}$ for a single vertex $v_j$ can be written as
\begin{equation} \label{eq:R3prob}
R^{(3)}_j = \left(W_j+\sum_{v_i \in V,v_i\neq v_j}{W_iP(v_i \rightarrow v_j)}\right) \left(\sum_{k=0}^{\infty}{(P(v_j \rightarrow v_j))^k}\right)
\end{equation}
where $P(v_i \rightarrow v_j)$ is the probability to hit vertex $v_j$ in a random walk starting at vertex $v_i$. This can be seen as the expected number of visits to $v_j$ if we do multiple random walks, starting at every vertex a number of times described by $\vec{W}$.
\end{definition}
In \cite{ce_ASMDA2015} we showed how to calculate PageRank for the five different types of vertices defined below
\begin{definition}
For the vertices of a simple directed graph we can define $5$ distinct groups $G_1,G_2,\ldots,G_5$ 
\begin{enumerate}
\item $G_1$: Vertices with no outgoing or incoming edges.
\item $G_2$: Vertices with no outgoing edges and at least one incoming edge (also called dangling vertices).
\item $G_3$: Vertices with at least one outgoing edge, but no incoming edges (also called root vertices). 
\item $G_4$: Vertices with at least one outgoing and incoming edge, but which is not part of any (non-loop) directed cycle (no path from the vertex back to itself apart from the possibility of a loop).
\item $G_5$: Vertices that is part of at least one non-loop directed cycle.
\end{enumerate}
\end{definition}
Qing Yu et al  gave a similar but slightly different definition of 5 (non distinct) groups for vertices, namely dangling and root vertices ($G_2$ and $G_3$), vertices that can be made into dangling or root vertices by recursively removing dangling or root vertices (part of $G_4$) and remaining vertices (part of $G_4$ and $G_5$) \cite{Qing_lumping2}. 
Given PageRank of a vertex not part of a cycle (group 1-4), then the PageRank of other vertices can be calculated by removing the vertex and modifying the initial weight of other vertices. 
\begin{theorem}\label{thm:vertex}
Given PageRank $\vec{R}^{(3)}_{g} $ of vertex $v_g$  where $v_g$ is not part of any non-loop cycle, the PageRank of another vertex $v_i$ from which there exist no path to $v_g$ can be expressed as
\begin{equation}
R^{(3)}_{i} = \left(W_i+R^{(3)}_{g}ca_{gi}+ \sum_{\stackrel{v_j \in V}{ v_j \neq v_i,v_g}}{(W_j+R^{(3)}_{g}ca_{gj}) P(v_j\rightarrow v_i)}\right) \left(\sum_{k=0}^{\infty}{(P(v_i \rightarrow v_i))^k}\right) 
 \end{equation}
where $ca_{gi}$ is the one-step probability to go from $v_g$ to $v_i$. 
\end{theorem}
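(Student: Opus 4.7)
The plan is to verify the formula using the random-walk characterization of $R^{(3)}$ from Definition~\ref{def:R3prob}. The intuition is that $R^{(3)}_g c a_{gj}$ is precisely the expected number of outgoing transitions from $v_g$ to the neighbor $v_j$, so ``removing $v_g$ and adding $R^{(3)}_g c a_{gj}$ to each neighbor's initial weight'' should preserve the expected visits to any vertex $v_i$ that lies strictly downstream of $v_g$. The two hypotheses do exactly the work needed to make this substitution algebraically exact: the non-cycle condition ensures each visit to $v_g$ has a clean exit distribution, and the no-path-to-$v_g$ condition from $v_i$ ensures the revisit factor $1/(1-P(v_i\to v_i))$ is unaffected by the substitution.

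I would start from Definition~\ref{def:R3prob} written for $v_i$ and split the sum into the $v_g$ term plus the remaining terms. Then I would establish two algebraic identities. First, a first-step analysis at $v_g$ that accounts for possible self-loops (but no other cycles, by hypothesis) gives
\[ (1 - c a_{gg})\, P(v_g \to v_i) = c a_{gi} + \sum_{v_j \ne v_i, v_g} c a_{gj}\, P(v_j \to v_i), \]
which shows that the coefficient of $R^{(3)}_g$ in the target formula is exactly $(1-ca_{gg}) P(v_g \to v_i)$. Second, from Definition~\ref{def:R3prob} applied to $v_g$ with $P(v_g \to v_g) = c a_{gg}$,
\[ R^{(3)}_g (1 - c a_{gg}) = W_g + \sum_{v_k \ne v_g} W_k\, P(v_k \to v_g), \]
and the hypothesis $P(v_i \to v_g)=0$ lets me restrict this sum to $v_k \ne v_g, v_i$ without changing its value.

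Combining these, I would substitute into the right-hand side of the target formula, factor out $P(v_g \to v_i)$ from the $R^{(3)}_g$ term, and show the result equals the expansion of $R^{(3)}_i$ from Definition~\ref{def:R3prob}. The matching of the ``walks from $v_k$ through $v_g$ to $v_i$'' terms uses the Markov-property decomposition $P(v_k \to v_i) = P_{\neg g}(v_k \to v_i) + P(v_k \to v_g) P(v_g \to v_i)$, valid because the no-path-to-$v_g$ hypothesis forces $v_g$ to be hit strictly before $v_i$ whenever both are hit, and the non-cycle hypothesis ensures continuation probabilities after $v_g$ do not recycle through $v_g$.

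The main obstacle is keeping the bookkeeping honest among three related quantities: $P(v_j \to v_i)$, its restriction to walks avoiding $v_g$, and the one-step transition $a_{gj}$. Each hypothesis contributes exactly one cancellation in the proof: without ``no non-loop cycle through $v_g$'' the first-step identity for $P(v_g \to v_i)$ becomes more complicated (since walks from $v_g$'s successors could return through $v_g$); without ``no path from $v_i$ to $v_g$'' both the revisit factor $1/(1-P(v_i\to v_i))$ would pick up contributions from $v_g$ and the reduction of $R^{(3)}_g$'s defining sum would fail. So the most delicate step is making sure every term produced by expanding $R^{(3)}_g(1-ca_{gg})P(v_g \to v_i)$ matches a term in the original $R^{(3)}_i$ expression, with no over- or under-counting of walks that traverse $v_g$.
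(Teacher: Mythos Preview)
Your approach is correct and begins exactly as the paper does: start from Definition~\ref{def:R3prob} for $v_i$, observe that the geometric return factor is unaffected (since $P(v_i\to v_g)=0$), and apply a one-step decomposition at $v_g$. The paper's proof stops essentially there: it writes the identity
\[
R^{(3)}_{g}P(v_g \to v_i) = R^{(3)}_{g}ca_{gi} + \sum_{v_j\ne v_i,v_g} R^{(3)}_{g}ca_{gj}\,P(v_j\to v_i)
\]
and then asserts that this lets one rewrite the left sum in Definition~\ref{def:R3prob} directly.

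You go further, and rightly so. The paper's substitution replaces the term $W_g P(v_g\to v_i)$ in the original sum by $R^{(3)}_g P(v_g\to v_i)$, which is only literally the same when $R^{(3)}_g = W_g$---i.e.\ in the root-vertex case $v_g\in G_3$ from which the paper says it adapts the argument. For the general statement one needs exactly the two extra ingredients you supply: the expansion $R^{(3)}_g(1-ca_{gg}) = W_g + \sum_{k\ne g} W_k P(v_k\to v_g)$, and the hitting-time decomposition $P(v_k\to v_i) = P_{\neg g}(v_k\to v_i) + P(v_k\to v_g)P(v_g\to v_i)$. These are what make the cross-terms $W_k P(v_k\to v_g)P(v_g\to v_i)$ cancel against the difference between $P$ and $P_{\neg g}$, so that the formula (with the hitting probabilities read in the graph with $v_g$ removed, which is the intended interpretation given Corollary~\ref{cor:level} and the way the method is used) comes out exactly. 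Your plan is sound; just make explicit when you write it up that the $P(v_j\to v_i)$ appearing in the target formula are to be taken in the graph with $v_g$ deleted, since the theorem statement itself leaves this implicit.
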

The proof with minor modifications is similar to the one found in \cite{ce_ASMDA2015} where it is formulated for vertices in $G_3$ on graphs with no loops.
\begin{proof}
Consider $R^{(3)}_{g}$ from Definition. \ref{def:R3prob}. 
Since we know that there is no path from $v_i$ back to $v_g$ (or $v_g$ would be part of a non-loop cycle) we know that the right hand side will be identical for all other vertices. We rewrite the influence of $v_g$ using
\begin{equation}\label{eq:G3_one_step}
R^{(3)}_{g}P(v_g \rightarrow v_i) = R^{(3)}_{g}ca_{gi} +  \sum_{\stackrel{v_j \in V}{ v_j \neq v_i,v_g}}{R^{(3)}_{g}ca_{gj} P(v_j\rightarrow v_i)} \enspace .
\end{equation}
We can now rewrite the left sum in Definition. \ref{def:R3prob}:
\begin{equation}
\sum_{v_i \in V,v_i\neq v_j}{W_iP(v_i \rightarrow v_j)} = R^{(3)}_{g}ca_{gi}+ \sum_{\stackrel{v_j \in V}{ v_j \neq v_i,v_g}}{(W_j+R^{(3)}_{g}ca_{gj}) P(v_j\rightarrow v_i)} \enspace 
\end{equation}
which when substituted into (\ref{eq:R3prob}) proves the theorem.
\end{proof}

It is also easy to show that any SCC can also be divided into one of the first four groups if we consider each SCC as a vertex in the underlying DAG (a SCC can never be part of a cycle). The important part of this is that it is also possible to calculate PageRank one component at a time rather than for the whole graph at once. 

\begin{corollary}\label{cor:level}
Let $\vec{R}^{(3)}_{L^+}$ be PageRank of all vertices belonging to components of level $L$  or greater. Then PageRank of a vertex $v_i$ belonging to a component of level $L-1$ can be computed by
\begin{gather*} R^{(3)}_{i} = \left(\sum_{k=0}^{\infty}{(P(v_i \rightarrow v_i))^k}\right) \\
\left( \left(W_i+c\left(\vec{R}^{(3)}_{L^+}\right)^\top \vec{a}_{L^+,i}\right)+ \sum_{\stackrel{v_j \in V}{ v_j \neq v_i}}{\left(W_j+c\left(\vec{R}^{(3)}_{L^+}\right)^\top \vec{a}_{L^+,j}\right) P(v_j\rightarrow v_i)}\right) \end{gather*}
where $\vec{a}_{L^+,i}$ is a vector containing all 1-step probabilities from vertices of level $L$ or greater to vertex $v_i$.
\end{corollary}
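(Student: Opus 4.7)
The plan is to exploit the DAG structure of the SCC/CAC partition established in the main theorem: by the level construction, no directed edge---and hence no directed path---leads from any vertex of $L^-$ to any vertex of $L^+$. Consequently $\vec{R}^{(3)}_{L^+}$ is determined without any input from vertices of lower level, and its only influence on those lower vertices enters through the single layer of cross-edges described by the submatrix ${\bf A}_{L^+,(L-1)^-}$. The idea is to absorb that influence into a modified weight vector on $L^-$ and then reduce the whole problem to a PageRank computation on the subgraph induced by $L^-$.

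First I would rescale the eigenvector equation $\vec{R}^{(1)} = {\bf M}\vec{R}^{(1)}$ through Def.~\ref{def:R3}. A short calculation shows that the dangling-vertex and teleportation contributions $c\vec{w}(\vec{g}^\top\vec{R}^{(1)}) + (1-c)\vec{w}$ combine to exactly $d\vec{w}$, so after multiplying by the scaling factor $||\vec{W}||_1/d$ one obtains the clean componentwise fixed point
\[
R^{(3)}_i = W_i + c\sum_{v_j \in V} R^{(3)}_j a_{j,i}.
\]
Splitting the sum on the right into its $L^+$ and $L^-$ parts and writing the $L^+$ piece as $c(\vec{R}^{(3)}_{L^+})^\top \vec{a}_{L^+,i}$, I would define $\tilde W_j := W_j + c(\vec{R}^{(3)}_{L^+})^\top \vec{a}_{L^+,j}$. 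The identity then becomes
\[
R^{(3)}_i = \tilde W_i + c\sum_{v_j \in L^-} R^{(3)}_j a_{j,i},
\]
which is the PageRank fixed point for the subgraph induced on $L^-$ with weight vector $\tilde W$. Feeding this reduced fixed point into Def.~\ref{def:R3prob} immediately produces the closed form in the statement; the apparent sum over all of $V$ in the corollary is harmless because, in the reduced walk, any term with $v_j \in L^+$ contributes zero.

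I expect the main obstacle, and really the only non-cosmetic point, to be the verification that the hitting probability $P(v_j \to v_i)$ and the self-return probability $P(v_i \to v_i)$ that Def.~\ref{def:R3prob} asks for on the $L^-$-subgraph coincide with the full-graph probabilities appearing in the statement. This is precisely where the DAG/level property is used a second time: since $v_i$ sits in a component of level $L-1$, every trajectory of the random walk that starts anywhere in $L^-$ stays in $L^-$ forever, so restricting the state space does not delete or modify a single trajectory that can reach $v_i$. With this observation the substitution is immediate and the formula of the corollary drops out.
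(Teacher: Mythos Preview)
Your argument is correct and complete, but it follows a different route from the paper's. The paper treats the corollary as an immediate extension of Theorem~\ref{thm:vertex}: that theorem already shows, via a probabilistic first-step decomposition of the walk in Definition~\ref{def:R3prob}, how to absorb the influence of a single cycle-free vertex $v_g$ into the weights of its out-neighbours; the corollary then simply replaces the single $v_g$ by the whole block $L^+$ at once, noting that no vertex of level $L-1$ or below has a path back into $L^+$.

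You instead bypass Theorem~\ref{thm:vertex} entirely. Starting from Definitions~\ref{def:R1} and~\ref{def:R3} you derive the linear fixed-point identity $R^{(3)}_i = W_i + c\sum_j a_{ji} R^{(3)}_j$, split the sum across the $L^+/L^-$ boundary, and recognise the residual system on $L^-$ as a genuine PageRank problem with modified weight $\tilde W$; only at that point do you invoke Definition~\ref{def:R3prob} to recover the probabilistic form. This is more self-contained---it never needs the single-vertex theorem---and it makes the matrix update~(\ref{PR:levels}) used later in the paper drop out for free. The paper's route is shorter on the page because the probabilistic bookkeeping has already been done once in the proof of Theorem~\ref{thm:vertex}. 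Your final observation, that the hitting probabilities on the induced $L^-$-subgraph coincide with the full-graph ones because a walk started in $L^-$ can never enter $L^+$, is exactly what the paper compresses into the phrase ``automatically does not have any path to $v_i$''; it also cleanly explains why the sum in the statement may be written over all of $V$ rather than only over $L^-$.
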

\begin{proof}
Follows immediately from Theorem \ref{thm:vertex} by replacing the rank of a single vertex with the sum of rank of all vertices belonging to components of a higher level. Those in lower level components or other components on the same level does not affect the rank since they automatically does not have any path to $v_i$.
\end{proof}

Using Corollary \ref{cor:level} it is clear that after calculating PageRank of all vertices belonging to components of level $L$ and above we can calculate those of level $L-1$ by first changing their initial weight and then consider the component by itself. In matrix notation we can update the weight vector for all components of lower level by calculating

\begin{equation}\label{PR:levels}
\vec{W}^{new}_{L-1} = \vec{W}^{old}_{L-1}+c{\bf A}_{L^+,L-1}\vec{R}^{(3)}_{L^+}
\end{equation}
where ${\bf A}_{L^+,L-1}$ corresponds to the submatrix of $A$ with all rows corresponding to vertices of level $L$ or greater and all columns of level $L-1$. 
This is essentially the same method which is used in \cite{journals/im/LeeGZ07} but here we have formulated it for any component instead of for dangling vertices (vertices with no outgoing edges). 

\section{Method}\label{sec:method}
The complete PageRank algorithm can be described in three main steps.
\begin{enumerate}
\item Component finding: Finding the SCC/CAC partitioning of the graph.
\item Intermediate step: Create relevant component matrices and weight vectors. 
\item PageRank step: Calculate PageRank one level at a time and components on the same level one at a time or in parallel.  
\end{enumerate}
In order to be able calculate PageRank for each component we obviously first need to find the components themselves, this is done in the component finding part of the algorithm where we find a SCC/CAC partitioning of the graph as well as the level of each component. By using the CAC/SCC partitioning rather than the usual SCC partitioning we reduce the risk of having very few vertices on the same level, the aim of this is to be able to avoid some of the disadvantages with some other similar methods such as the one in \cite{Langville:2005:RPP:1114107.1117871} where a large number of small levels (small diagonal blocks) increases the overhead cost \cite{Qing_lumping2}. This step is similar to the initial matrix reordering made by \cite{Arasu_et_at}. However instead of only finding a partial ordering we have modified the depth first search slightly in order to identify components that can be calculated in parallell as well as group 1-vertex components on the same level together. Another advantage is that different methods can be used for different types of components as we will see later. The component finding step is described in Sec. \ref{sec:component_finding}. 

In the intermediate step the data (edge list, vertex weights) need to be managed such that the individual matrices for every component can quickly and easily be extracted. This section of the code can vary a lot between implementations and is one of the main contributors of overhead in the algorithm. The SCC/CAC partitioning can easily be transformed into a permutation matrix and used to permute the graph matrix and then solve the resulting linear system, this can be seen as an alternative to the recursive reordering algorithm described in \cite{Langville:2005:RPP:1114107.1117871}. This step is described in Sec. \ref{sec:intermediate}.

After the intermediate step we are ready to start calculating PageRank of the vertices. This is done one level at a time starting with the highest and modifying vertex weights between levels using (\ref{PR:levels}). Components on the same level can use different methods to calculate PageRank and can either be calculated sequentially or in parallel. The PageRank step is described in Sec. \ref{sec:PageRank_step}.

\subsection{Component finding}\label{sec:component_finding}
The component finding part of the algorithm consists of finding a SCC/CAC partitioning of the graph as well as the corresponding levels of the components. Since any loops in the graph have no effect on which SCC or CAC a vertex is part of, these will be ignored in the component finding step. Finding the components and their level can be done through a modified version of Tarjan's well-known SCC finding algorithm using a depth first search \cite{Tarjan72depthfirst}. For every vertex $v$ we assign six values:
\begin{itemize}
\item v.index containing the order in which it was discovered in the depth first search.
\item v.lowlink for a SCC representing the lowest index of any vertex we can reach from $v$, or for a CAC representing the `head' vertex of corresponding component.
\item v.comp representing the component the vertex is part of, assigned at the end of the component finding step. 
\item v.depth used to implement efficient merges of components. It can be removed if the extra memory is needed, but it could result in slowdown because of merges for some graphs. 
\item v.type indicates if $v$ is part of a SCC or a CAC (1-vertex SCCs are considered CACs). 
\item v.level indicating the level of the component to which $v$ belongs. 
\end{itemize}
Of these the first three can be seen in Tarjan's algorithm as well, and play virtually the same role here (although in Tarjan's the comp value can be assigned as components are created).
During the depth first search each vertex $v$ goes through three steps in order.
\begin{enumerate}
\item Discover: Initialize values for the vertex.
\item Explore: Visit all neighbors of $v$, finishing the DFS of any unvisited neighbors before going to the next. After a vertex is visited we update v.lowlink and v.level. 
\item Finish: After all neighbors are visited we create a new component if appropriate. If a CAC is created we also check for and do any merge with $v$ as head.
\end{enumerate}
During the discover step values are initialized after which the vertex is put on the stack (.type and .comp do not need to be initialized)
\begin{lstlisting}[frame=single] 
Discover(Vertex v) 
	v.index := index
	v.lowlink := index
	v.level := 1
	v.depth := 1
	index++
	stack.push(v) //add v to the stack
end
\end{lstlisting}
Here index is a counter starting at $1$ and increasing by one for every new vertex we discover. The explore step as well works much like Tarjan's depth first search except that it also updates the level of the vertex we are exploring. 
\begin{lstlisting}[frame=single] 
Explore(Vertex v)
	for each (v, w) %* $\in$ *)  E
		if w is not initialized
			DFS(w) //Discover(w), Explore(w) and Finish(w)
		end
		/* At this point w is either in a component already 
		or belong to the same SCC as v */
		if w is in a component (.type is defined)
			v.level = max(v.level,w.level+1)
		else 	//w belong to the same SCC
			v.level = max(v.level,w.level)
			v.lowlink = min(v.lowlink,w.lowlink)
		end
	end
end	
\end{lstlisting}
The last step in the DFS is where we evaluate if a new component should be created and handle any merges needed with this vertex as `head' component. The initial component is created in the same way as in Tarjan's algoriithm: if v.lowlink $=$ v.index we create a SCC by popping vertices from the stack until we pop $v$ from the stack.

\begin{lstlisting}[frame=single] 
Finish(Vertex v)
	if v.lowlink = v.index
		size = 0;
		do
			w := stack.pop() 
			w.lowlink = v
			w.level = v.level
			w.type = scc
			merge(w,v) //add w to component
			size++
		while (w != v)
		
		if size = 1 //(one vertex component)
			v.type = cac
			//check for merges
			for each (v, w) %* $\in$ *) E
				m = false
				list = []
				if w.type = scc and w.level = v.level-1
					m = false
					break
				end
				if w.type = cac and w.level = v.level-1 
					list.add(w)
					m = true
				end
			end
			if m = true //(adjust .level if a merge occured )
				v.level = v.level-1
				for each w in list
					merge(w,v) 
				end
			end
		end	
	end
end
\end{lstlisting}
The first part is similar to Tarjan's with some extra bookkeeping so we will focus our explanation of the second part. However in order to explain how the merges are done
in constant time we start by explaining how we store the component data. The components are stored in a merge-find data structure through the .lowlink and .depth attribute. 
A merge-find data structure allows us to do the two operations we need: merging two components and finding a `head' vertex representing a component (used in merge, and by itself later).
This has the advantage that both operations can be done in constant amortized time ($O(\alpha(|V|))$) as well as requiring very little memory.

In Tarjan's algorithm you don't need the .depth values since the .comp value can be assigned while creating a component. The reason we do not do it here is because it cannot be updated when doing a merge of two CACs (unless we loop through all vertices), and even if we could, it would be possible to end up with some empty components that would have to be 'cleaned up' in one way or another later anyway. 

Looking at the computational complexity of the component finding step we see that discover, explore and finish are all done once for every vertex, of these
discover is obviously done in constant time. During explore we will eventually have to go through all edges exactly once but all operations take only
constant time. Last finish is called once for every vertex doing $O(\alpha(|V|))$ work in the first half (amortized constant because we do merges rather
than assigning component values directly). In the second half we will at most visit every edge once (over all vertices) doing $O(\alpha(|V|))$ work. 
Thus in total we end up with $O(|V|+|E|+|V|\alpha(|V|)+|E|\alpha(|V|)) \approx O(|E|\alpha(|V|)), \text{ if } |E| > |V| $ , in other words linear amortized time in the number of edges.    

Before returning the results $|V|$ find operations also need to be done in order to assign the .comp value for each vertex, since the find operation takes $O(\alpha (|V|))$ time this takes $O(|V|\alpha(|V|))$ time in total.
\begin{lstlisting}[frame=single] 
for each v %* $\in$ *) V
	h := find(v) //find head vertex
	ind := 1
	if h.comp is not defined
		h.comp := ind
		ind++
	end
	v.comp := h.comp	
end
\end{lstlisting}

Hence the complete component finding algorithm takes $O(|E|\alpha(n))$ time which is comparable to Tarjan's which can be implemented in $O(|E|)$ time. We note that if the .depth value is ignored everything works but the merges are no longer guaranteed to be made in constant amortized time. If memory is a concern or if the size of the CACs are assumed to be small it might be worthwile to work without the .depth value even though merges could be slow in the worst case scenario. 

\subsection{Intermediate step}\label{sec:intermediate}
After we have found the SCC/CAC partitioning some additional work need to be done in order to continue with the PageRank calculations effectively. 

The goal of the preprocessing step is to make sure that we easily and quickly can construct corresponding matrix for each component as well 
as sort the components. We sorted components first in order of level and second in the size of the component (both descending order) so that we can work on one
component at a time starting with the largest on every level.

We note that the preprocessing step can differ highly depending on implementation, hence we will only give a short comment on how we choose to do it.
The goal is to have separate edge lists for edges within each component as well as edges between levels. To get this we start by sorting the
first according to their level and second according to their size. This is then used to permute the edge list such that they are ordered in the same way depending on their starting vertex. It is important to note that this sorting of components or edges can both be done in linear time rather than $O(n\log{n})$ as with ordinary sort. This is true since the possible values are known and bounded hence no comparisons  need to be made.   

After the edges are sorted we go through the edge list and assign each edge to the correct component or in between level list such that corresponding 
component matrices can be created. In this step we also merge all 1-vertex components of each level to avoid having to calculate the rank of multiple 1-vertex components of the same level.

After the preprocessing step we should have the matrices (or edge lists) $M_c$ and weight vector $V_c$ for all components stored such that we can access them when needed.

It is worth to note that in our implementation this section of the code contains much of the extra overhead needed for our method (more than the previous component finding step itself).  

\subsection{PageRank step}\label{sec:PageRank_step}
Now that all preliminary work is done we can start the actual PageRank calculation where we calculate PageRank for all vertices of one level at a time (starting by the largest). The PageRank step can be described by the following steps. 

\begin{enumerate}
\item Initiate $L$ to the maximum level among all components. 
\item For each component of level $L$: pick a suitable method and calculate PageRank of the component. 
\item Update weight vector $V$ for all remaining components (of lower level).
\item Decrease $L$ by one and go to step $2$ unless we have already calculating PageRank of all vertices. 		
\end{enumerate}
Depending on the type and size of the component PageRank we calculate PageRank in one of four different ways: 

\begin{itemize}
\item Component is made up of a collection of single vertex components (no internal edges apart from loops): PageRank of any such collection of 1-vertex components is simply the initial weight $w_i$ for vertices with no loop and $\frac{w_i}{1-ca_{ii}}$ for any vertex with a loop, where $a_{ii}$ is the weight on the loop edge.  
\item Component is a CAC of more than one vertex: use CAC-PageRank algorithm \ref{sec:CACalg} described later.
\item Component is a SCC but small (for example less than 100 vertices): Calculate PageRank by directly solving the linear equation system $({\bf I}-c{\bf A}^\top)\vec{R} = \vec{W}$ (using for example LU factorization).   
\item Component is a SCC and large: Use iterative method, in our case using a power series formulation.  
\end{itemize}
Out of these the first one is done in $O(|V|)$ time (copy the weight vector) or $O(1)$ if no separate vector is used for the resulting PageRank and we assume there is no loops, the second and fourth is done in $O(|E|)$, however the coefficient in front of the first is much lower since it is guaranteed to only visit every edge once, while the iterative method needs to visit every edge in every iteration (number of which depend on error tolerance and method chosen). The third is done in $O(|V|^3)$ using LU factorization, however since $|V|$ is small this is still faster than the iterative method unless the error tolerance chosen is large.  

We also note that only the fourth method actually depend on the error tolerance at all, every other method can be done in the same time regardless of error tolerance (down to machine precision). \\

After we have calculated PageRank for all components on the current level we need to adjust the weight of all vertices in lower level components as shown in \ref{PR:levels}. This can be done using a single matrix-vector multiplication using the edges between the two sets of components. 
$$ \vec{W}^{\text{new}}_{(L-1)^-} = \vec{W}^{\text{old}}_{(L-1)^-} + {\bf M}^\top_{L^+,(L-1)^-}\vec{R}_{L}$$
This is the same kind of correction as is done in for example \cite{FAndersson_art_PR} and for the non-normalized PageRank used here in \cite{ce_ASMDA2015}. \\

In the weight adjustment step every edge is needed once if it is an edge between components and never if it is an edge within a component.
Hence if we look over the whole algorithm: every edge is visited at most twice in the DFS, then every edge that is not part of a SCC is visited exactly once
more (either as part of a CAC or as an edge between components) while those that are part of a SCC are typically visited a significantly larger number of times
depending on algorithm, error tolerance and convergence criterion used. Of course there is also some extra overhead that would need to be taken into consideration for a more in-depth analysis. \\

We note that calculating PageRank for all components on a single level can be computed in parallel (hence why we sort them by their size starting by the largest).
The weight adjustment can either be done in parallel for each component or as we have done here once for all components of the same level. In case there is a single very large component on a level it might be more appropriate to do it one component at a time instead to reduce the time waiting for the large component to finish. 

\subsubsection{CAC-PageRank algorithm}\label{sec:CACalg}
The CAC-PageRank algorithm exploits the fact that there are no non-loop cycles to calculate the PageRank of the graph using a variation of Breadth first search. 
The algorithm starts by calculating the in-degree (ignoring any loops) of every vertex and stores it in v.degree for each vertex v, this can be done by looping through all edges once.
We also keep a value v.rank initialized to corresponding value in the weight vector for each vertex. The BFS itself can be described by the following 
psuedocode. 
\begin{lstlisting}[frame=single] 
for each v %* $\in$ *) V
	if v.degree = 0	
		Queue.enqueue(v)
		while Queue.size > 0
			w = Queue.dequeue()
			w.rank = w.rank*(1/(1-W(w,w)))	 //adjust for loops
			for each (w,u) %* $\in$ *) E/(w,w)
				u.rank = u.rank + w.rank * W(w,u)
				u.degree = u.degree-1
				if w.degree = 0
					Queue.enqueue(w)
				end
			end
			w.degree = w.degree-1  //ensure w is never enqued again
		end
	end
end
\end{lstlisting}
Here W(w,u) is the weight on the (w,u)-edge ($ca_{wu}$). Note that $W(w,w) = 0$ if $w$ has no loop, which gives simply multiplication by 1 when adjusting for loops. The difference between this and ordinary BFS is that we only add a vertex $v$ to the queue once we have visited all incoming edges to $v$. We also loop through all vertices to ensure that we wisit each vertex once since there could be multiple vertices with no incoming edges. 

Usually PageRank is defined only for graphs with no loops (or by ignoring any loops that are present), this simplifies the algorithm slightly in that the loop adjustment step can be ignored. 

Looking at the computational complexity it is easy to see that we visit every vertex and every edge once doing constant time work, hence we have the same time
complexity as for ordinary BFS $O(|V|+|E|) \approx O(|E|), \text{ if } |E| > |V|$. While it has the same computational complexity as most numerical methods used to calculate PageRank, in practice it will often be much faster in that the coefficient in front will be smaller, especially as the error tolerance decreases.

\subsubsection{SCC-PageRank algorithm}\label{sec:SCC:PR}
If the component is a SCC, then we cannot use the previous algorithm. Instead one of many iterative methods needs to be used (unless the size of the component is small). We have chosen to calculate PageRank for SCCs using a power series as described in \cite{FAndersson_art_PR} since it is a more natural fit with the non-normalized variation of PageRank used here. The method works by iterating the following until some convergence criterion is met.

$$\left \{ \begin{array}{ll} \vec{P}_{n+1} &= c{\bf A}\vec{P}_{n}\\
\vec{P}_{0} &= \vec{W} \end{array} \right.$$
$$\vec{R}^{(3)}_{n} = \sum_{k=0}^{n}\vec{P}_k $$

Although any method can be used, by using a power series we have the advantage in that we get the PageRank in the correct non-normalized form we need without the need to re-scale the result. In practice any other method such as a power iteration could be used as well by scaling the result as described in Def. \ref{def:R3}. The calculation of these components could also benefit from the use of other methods designed to improve the calculation time of PageRank such as the adaptive algorithm described in \cite{Kamvar200451}. Any other algorithm which depend on the presence of dangling vertices or SCCs would not be useful here however since we are working on a single SCC.

The method you get by using a power series formulation have been shown to have the same or similar convergence speed as the more conventional Power 
method experimentally \cite{FAndersson_art_PR}. It is easy to show that this is equivalent to using the Jacobi method with initial vector equal to $\vec{W}$, and obviously Gauss-Seidel or a similar method using a power series formulation could be used and would likely provide some additional speedup. Theoretically the convergence of the Power method can be shown to be geometric depending on the two largest eigenvalues with ratio $|\lambda_2|/|\lambda_1|$ where $\lambda_2 \le c$ \cite{ilprints582}, obviously we have a similar convergence by calculating corresponding geometric sum (geometric, bounded from above by $c$). The method can be described through the following pseudocode.

\begin{lstlisting}[frame=single] 
rank = W //initialize rank to the weightvector
mr = W; //(initialize rank added in previous iteration
do
	mr = M*mr
	rank = rank+mr
while (max(mr) %* $\ge$ *) tol)
\end{lstlisting}
This is also the baseline method we use for comparison with our own method later. 
\subsection{Error estimation}\label{sec:error}
When looking at the error we have two different kinds of errors to consider, first errors from the iterative method used to calculate PageRank of large SCCs (depending on error tolerance) 
and second any errors because of errors in data (${\bf M}$ or $\vec{W}$). We will mainly concern ourselves with the first type which is likely to dominate unless the error tolerance is very small. 

We start by looking at a single isolated component, if this component is a CAC or a small SCC we calculate PageRank exactly and errors can be assumed to be small as long as an appropriate method is used to solve the linear system for the small SCCs using for example LU-decomposition. For large SCCs we stop iterations after the maximum change of rank for any vertex between any two iterations is less than
the error tolerance ($tol$). Since the rank is monotonically increasing we can be sure that the true rank is always a little higher in reality than what is actually calculated. 

The true rank can be described by $R^{(3)} = \sum_{k=0}^\infty {\bf M}^k\vec{W}$, where we let $\vec{p}_k = {\bf M}^k\vec{W}$. Since every row sum of ${\bf M}$ is less than or equal to $c$, one has $c |\vec{p}_{k-1}| \ge |\vec{p}_k|$. This means that the maximum change in rank over all vertices in the graph after $K$ iterations is bounded by 
$$\sum_{k=1}^{\infty}c^k|\vec{p}_K| = \frac{c |\vec{p}_K|}{1-c} \approx 5.66|\vec{p}_k|, ~ c=0.85$$

This does not change if there are any edges to or from other components in the graph although the difference will be spread over a larger amount of vertices if there are edges from the component. There might also be additional additive error from components with edges to the single component we are considering. Over all vertices and all components we can estimate bounds for the total error $\epsilon_{tot}$ over all vertices as well as the average error $\epsilon_{avg}$ given some error tolerance $tol$.
$$\epsilon_{tot} < |\text{SCC}_l| \cdot \text{tol} \frac{c}{1-c} $$
$$\epsilon_{avg} < \frac{|\text{SCC}_l|}{|V|} \cdot \text{tol} \frac{c}{(1-c)} \le \text{tol} \frac{c}{(1-c)}$$
where $|SCC_l|$ is the number of vertices part of a 'large' SCC (for which we need to use an iterative method) and $|V|$ is the total number of vertices in the graph. It should be noted that this estimate is likely to be many times larger than in reality unless all the vertices have approximately the same rank. Given that PageRank for many real systems approximately follows a power distribution \cite{Becchetti06thedistribution}, most vertices will have orders of magnitude smaller change in rank when finally those with a very high rank have a change smaller than the error tolerance. Additionally if the graph contains some dangling vertices (vertices with no outgoing edges), then these will further reduce the error (can be seen as an increased chance to stop the random walk). 
\section{Experiments}\label{sec:experiments}
Our Implementation of the algorithm is done in a mixture of c and c++ for the graph search algorithms (component finding and CAC-PageRank algorithm) and Matlab
for the ordinary PageRank algorithm for SCCs and weight adjustment as well as the main code gluing the different parts together. The reason to use c/c++ for some
parts is that while Matlab is rather fast at doing elementary matrix operations (PageRank of SCCs and weight adjustment), it is very slow when you attempt to do for example DFS or BFS on a graph. 
\begin{itemize}
\item Component finding: Implemented in c++ as a variation of the depth first search in the boost library. The method is implemented iteratively
rather than recursively (hence it can handle large graphs which could otherwise give a very large recursion depth). 
\item CAC-PageRank algorithm: Implemented in c. 
\item Power series PageRank algorithm: Implemented in Matlab. Used for SCCs as well as on the whole graph for comparison.
\item Main program: Implemented in Matlab, with c/c++ parts used through mex files. 
\end{itemize}
\subsection{Graph description and generation}
Many real world networks including the graph used for calculating PageRank for pages on the Internet share a number of important properties. 

First of all they should be \emph{scale-free}, these networks are characterized by their degree distributions roughly following a power law, if $k$ is the degree then the cumulative distribution for the degree can be written as $P(k) = k^{-\gamma}$. In practices this means that there are a low number of very high degree vertices as well as a large amount of very low degree vertices. 

Secondly they should be \emph{small-world}, these networks are characterized by two different properties 1) the average shortest path distance between any two vertices in the graph is small, roughly the logarithm of the number of vertices in the graph and 2) the network has a high clustering coefficient. So while the first property implies a high connectivity in the network because of the short distances, the second property says that the network should contain multiple small communities with high connectivity among themselves but low connectivity to vertices outside the own group. 

In order to evaluate the method we have chosen to look at five different graphs of varying properties and size.
\begin{itemize}
\item B-A: A graph generated using the Barab\'asi-Albert graph generation algorithm \cite{Barabasi15101999} with mean degree 12, after generation only some of outgoing edges are kept in order to generate a directed graph. 
This graph contains 1000000 vertices of which 959760 are part of a CAC and 999128 edges, there are 188010 1-vertex CACs out of 239258 CACs in total as well as 19813 SCCs. Maximum component size is 483874 vertices (CAC). A second graph where even fewer outgoing edges was kept was also used.\\

The Barab\'asi-Albert graph generation algorithm creates a graph which is scale-free, however it does not have the small-world property \cite{Estrada:2011:SCN:2181136}, the reason we still used this for one of our tests is that it makes it easy to create a scale-free graph with primary acyclic components. 

\item Web: A graph released by Google as part of a contest in 2002 \cite{google_contest} part of the collection of datasets maintained by the SNAP group at Stanford University \cite{snapnets}.
This graph contains 916428 vertices of which 399605 are part of a CAC and 5105039 edges, there are 302768 1-vertex CACs out of 321098 CACs in total as well as 12874 SCCs. Maximum component size is 434818 vertices (SCC). This graph has both the scale-free and small-world properties making it a good example of the type of graph we would be interested to calculate PageRank on in real applications.\\

We also created two additional even larger graphs by using multiple copies of this graph: 1) the graph composed of ten disjoint copies of this graph and 2) a graph composed of ten copies of the web-graph with a small number (20) extra random edges in order to get a single very large component as is common for real-world networks. 
\end{itemize}
\subsubsection{Barab\'asi-Albert graph generation}
The model works by selecting a starting seed graph of $m_0$ vertices, in our case a $20\times20$ graph with uniformly random edges with mean degree $5$. Then new vertices are added iteratively to the graph one at a time by connecting each new vertex with $m \le m_0$ existing vertices with probability proportional to the number of edges already connected to each old vertex. This can be written as
$$p_i = \frac{d_i}{\sum_j d_j} $$
where $d_i$ is the degree of vertex $i$. 

The Barab\'si-Albert model gives an undirected graph. In order to transform it into a directed graph we then went through each vertex and removed some edges. The number of edges that was kept originating from each vertex $E_{i,keep}$ can be described by
$$E_{i,keep} = \lceil \log_2(E_i) \rceil $$
where $E_i$ is the number of edges originally originating from vertex $i$. We choose $m=12$ which after removal of edges gives an average (in$+$out)-degree of $\log_2(24) \approx 4.6$. After removal of vertices all vertices will have similar out-degree, but the in-degree will be similar to how it was after the original Barab\'si-Albert graph generation (in-degree following a power law). 

\subsection{Results}
All experiments are performed on a computer with a quad core 2.7Ghz(core)-3.5Ghz(turbo) processor (Intel(R) Core(TM) i7-4800MQ) using Matlab R2014a with four threads on any of the parts computed in parallel. 

Three different methods where used
\begin{enumerate}
\item Calculate PageRank as a single large component as described in Sec. \ref{sec:SCC:PR} 
\item Using the method described in \ref{sec:method} with components on the same level calculated sequentially. 
\item Using the method described in \ref{sec:method} with components on the same level calculated in parallel. 
\end{enumerate}
In addition for all three method any loops in the graph where ignored (as is common for PageRank). 

We note that the intermediate step between method 2 and 3 differs. Because of limitations in how the parallelization can be implemented we had to separate edges of the same component into their own cell-array for the parallel version, this accounts for the main difference in overhead between method 2 and 3. It should be noted that the intermediate step is not parallelized (in either version) and is something which could probably be significantly improved by implementation in another programming language. 

After doing the SCC/CAC partitioning of the graph and sorting all components according to their level and component size (both descending order) we can visualize the non-zero values of this new reordered adjacency matrix. The density of non-zeros before and after reordering for the Web graph can be seen in Fig. \ref{fig:spyG}.  
\begin{figure}[!hbt]
\captionsetup{width=0.8\textwidth}
\begin{center}
\includegraphics[width=.85\textwidth]{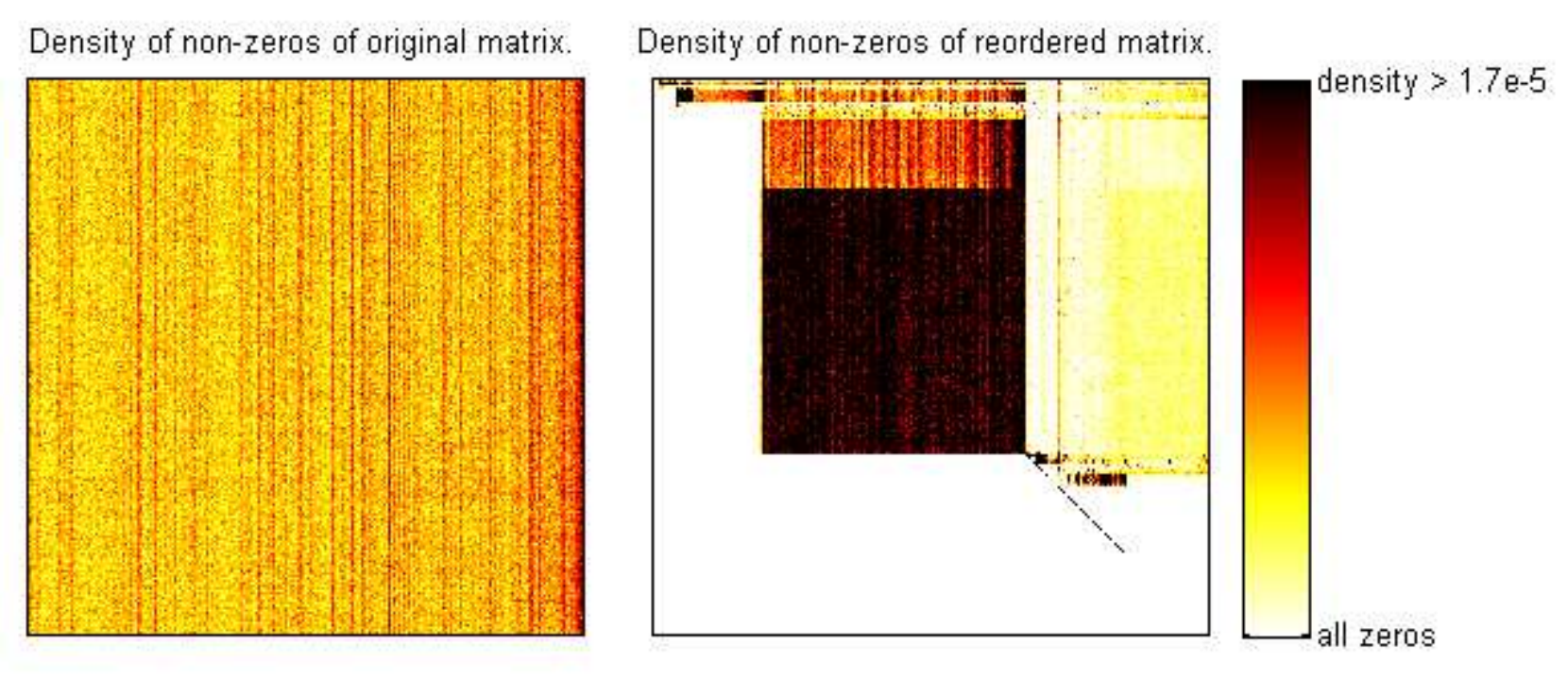}
\end{center}
\caption{Non-zero values of adjacency matrix for the Web graph before and after sorting vertices according to level and component.}
\label{fig:spyG}
\end{figure} 
Note that the diagonal lines are not single vertex components but rather a large amount of small components on the same level (hence they can be computed in parallel), for a view of some of the small components see Fig. \ref{fig:spyG4c}. Any 1-vertex component are not colored since they have no internal edges, two large section of 1-vertex components are right before the middle large component and in the bottom right corner of the matrix. A cutoff at a density of $1.7 \cdot 10^{-15}$ non-zero elements is used in order to avoid problems with a few very high density sections as well as maintaining the same scale in both figures. 

After finding the SCC/CAC partitioning large sections of zeros can clearly be seen, something which is not present in the original matrix. The single very large component in the graph is seen in the middle of the matrix, with a section of small components both above and below it.

Langville and Meyer does a similar reordering by recursively reordering the vertices by putting any dangling vertices last and not considering edges to those already put last once for any further reordering of remaining vertices \cite{Langville:2005:RPP:1114107.1117871}. This effectively creates one or more CACs along with one large component. The advantage of our approach compared to this is that we can also find components above the single large component rather than combining them into a single even larger component as well as finding sets of components which can be computed in parallel. 
\begin{figure}[!hbt]
\captionsetup{width=0.8\textwidth}
\begin{center}
\includegraphics[width=.5\textwidth]{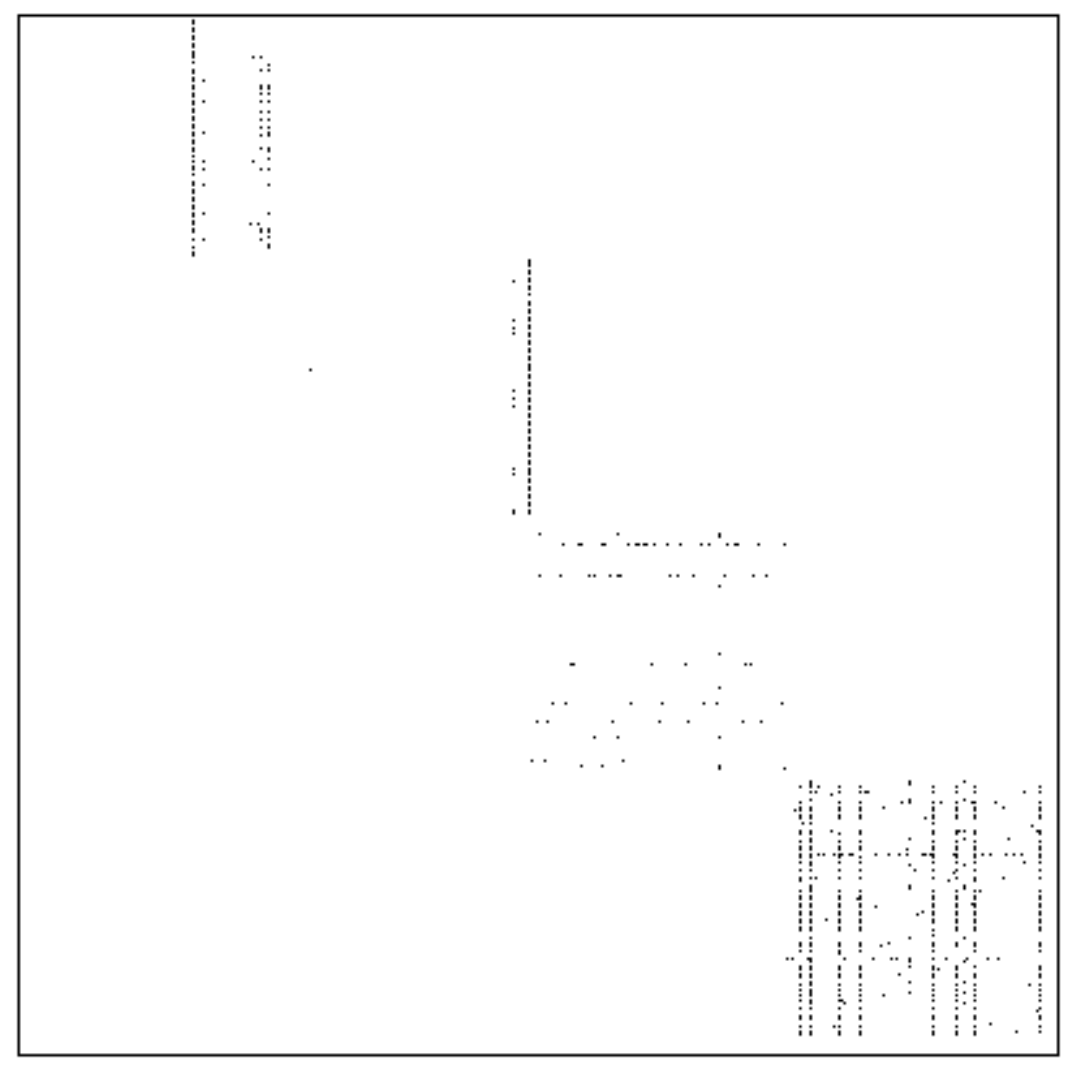}
\end{center}
\caption{Non-zero elements in parts of the bottom right diagonal 'line' of the reordered adjacency matrix.}
\label{fig:spyG4c}
\end{figure} 
In Fig. \ref{fig:spyG4c} some common types of smaller components can be seen. First there are 2 CACs where the majority of the vertices link to the same or a very limited number of vertices creating a very shallow tree. The third component is also a CAC (as seen by the zero rows) have a more advanced structure, while the last one is probably a SCC. There is a large amount of CACs similar to the first 2 with the majority of the vertices linking to one or two vertices but there are also some characterized by a horizontal line representing one or a few vertices linking to a large number of dangling vertices. 

The total number of levels in the Web graph was 28, with the majority being located right at the very top or right after the large central component. More research would be needed to verify if this is usually the case, but if so it might be a good idea to merge some of these very small levels and calculate them as if it was a SCC in order to reduce overhead. For example the first 10 levels contain just 85 vertices in total and could be merged, after the large component there is only 2-3 of these very small components with the rest having at least a hundred or so vertices hence it might not be as useful here. If no merging of 1-vertex CACs was done (using the ordinary SCC partitioning) the number of levels was increased to 34 levels instead.

Since PageRank of different SCCs converge in varying amounts of iterations it is also of interest to see how the number of iterations for different components varies, as well as how it compares to the number of iterations needed by the basic algorithm where we calculate PageRank as if the graph was a single component. Number of iterations for all SCCs of more than $2$ vertices of the Web graph with $c=0.85$ and $\text{tol} = 10^{-9}$ can be seen in Fig. \ref{fig:google_itr}.
\begin{figure}[!hbt]
\captionsetup{width=0.8\textwidth}
\begin{center}
\includegraphics[width=.5\textwidth]{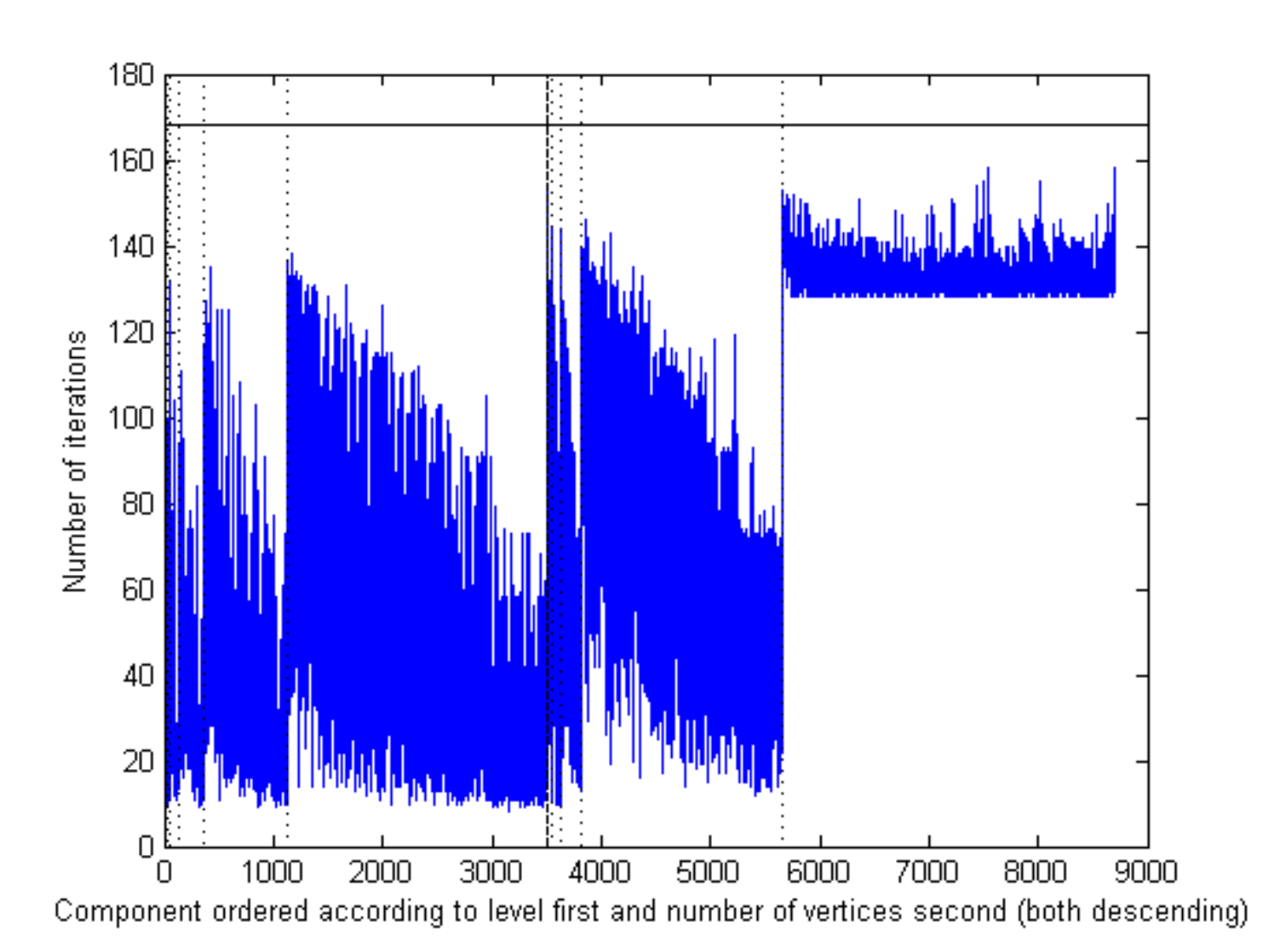}
\end{center}
\caption{Number of iterations needed per SCC ordered according to their level first and number of vertices second (both descending order). The dotted vertical lines denotes where one level ends and the next one starts while the horizontal line denotes the result where the whole graph is considered a single component. $c=0.85,~\text{tol} = 10^{-9}$}
\label{fig:google_itr}
\end{figure} 
In Fig. \ref{fig:google_itr} we can see a couple of things, first the number of iterations over any component is less than the number of iterations that would be needed if we calculated PageRank of the graph as if it was a single huge component. The average number of iterations per edge was $148$, which can be compared to the number of iterations for the graph as a single component which was $168$, this gives an improvement of approximately $12\%$. This might look small looking at the figure, but remember that the largest components on each level are put first on their level and the size of components approximately follows a power law, hence large parts of the figure represent relatively few vertices. It should also be noted that a significant number of edges (approximately $26\%$) lies either between components or within CACs both of which are not counted for here since they don't use the iterative method and are instead used only once either to modify weights between levels or as part of the DFS when calculating PageRank of CACs. 

The second point of interest is that there is a clear difference between components at the last level compared to those of a higher level. Any SCC on the last level is by definition a stochastic matrix (before multiplication with $c$) since they have no edges to any vertex in any other component, this gives a lower bound on the number of iterations equal to $\log tol / \log c \approx 128 $ easily seen from the relation $c^\text{itr} \le \text{tol}$, where $\text{itr}$ is the number of iterations. However those component of higher level are by definition a sub stochastic matrix (before multiplication with c) since there is at least one edge to some other component. This is equivalent to some vertices having a lower $c$ value and the algorithm can therefor converge faster. 

The third observation is that a large component generally needs more iterations than a smaller component. This makes sense if we consider that as long as most vertices in the component do not contain edges to other components, as the component grows in size at least some part of the component will behave similar to those in the last level and we thus need a larger number of iterations.  For large components the estimated number of iterations is usually quite good, while for small components it usually gives a too high estimate (unless it is part of the last level). 

The running time in seconds for the Web graph for the three different methods for different values of error tolerance from $10^{-1}$ to $10^{-20}$ can be seen in Fig. \ref{fig:google1a}. 
\begin{figure}[!hbt]
\captionsetup{width=0.8\textwidth}
\centering
\begin{subfigure}[t]{ .4\textwidth}
\centering
\includegraphics[width = 1\textwidth]{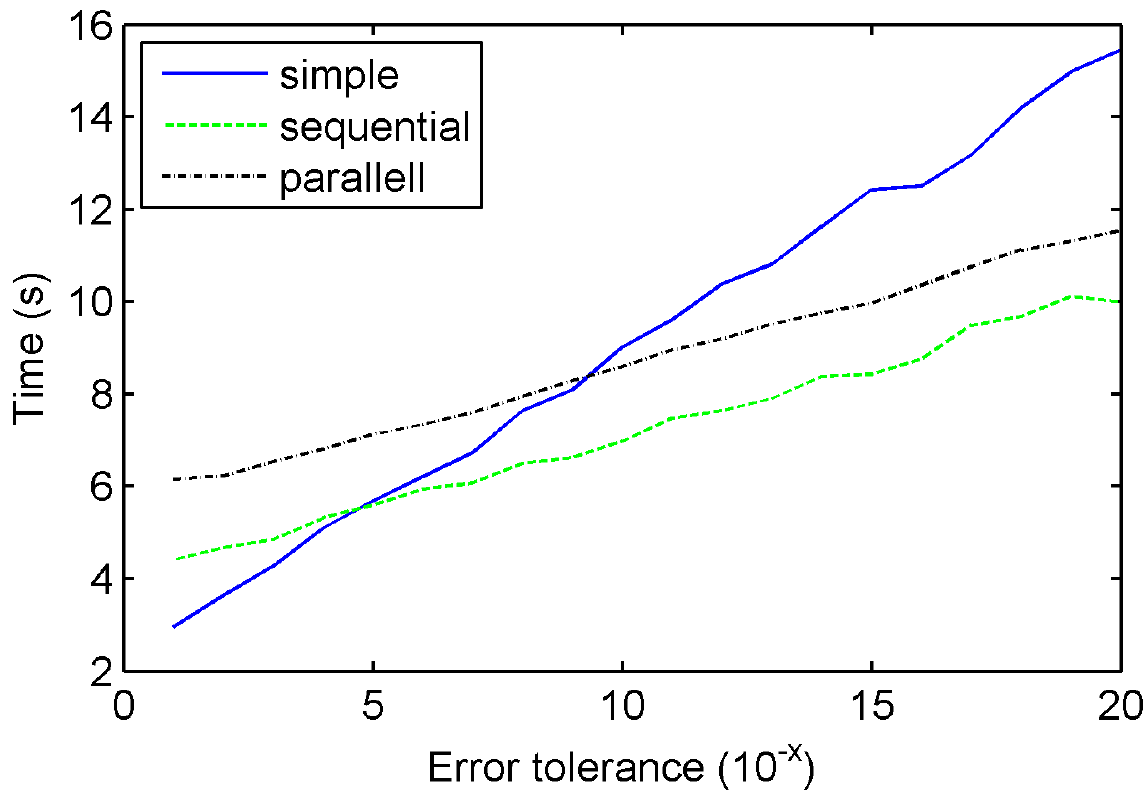}
\caption{} 
\label{fig:google1a}
\end{subfigure}%
~
\begin{subfigure}[t]{ .4\textwidth}
\centering
\includegraphics[width = 1\textwidth]{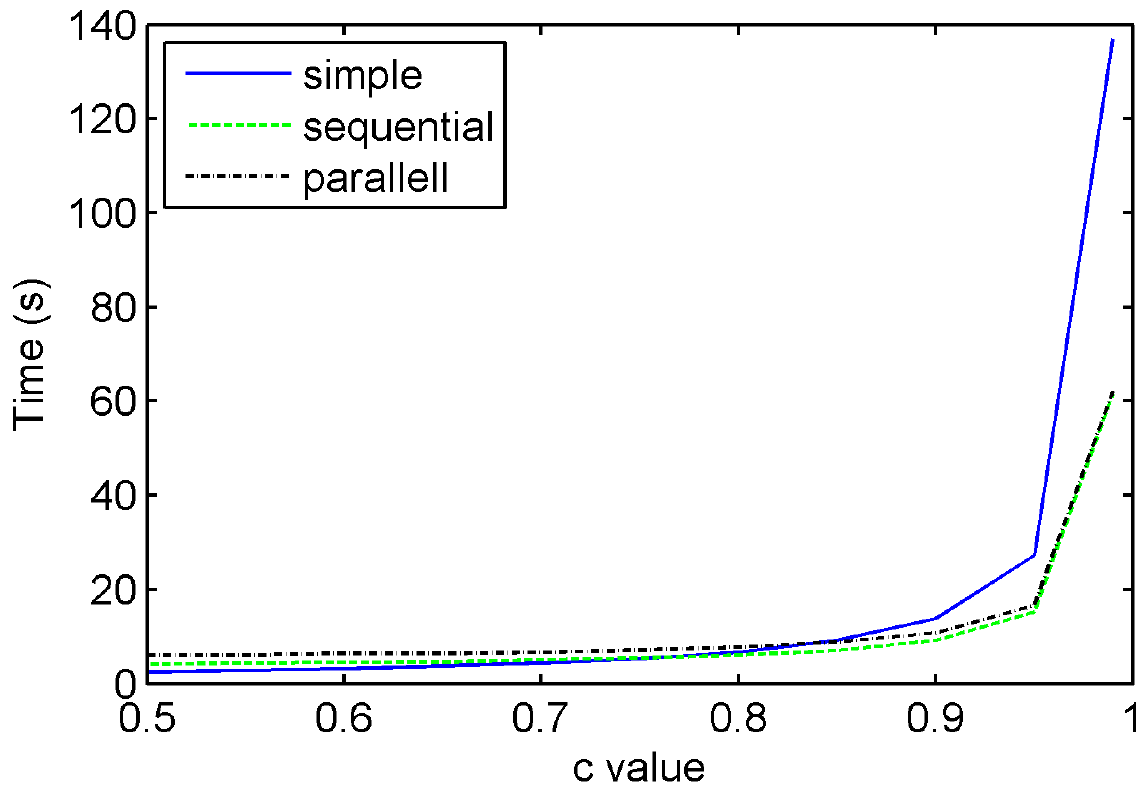}
\caption{} 
\label{fig:google1b}
\end{subfigure}
\caption{Running time needed to calculate PageRank for 3 different methods (a) depending on error tolerance using $c=0.85$ and (b) depending on $c$ between $0.5$ and $0.99$ using $\text{tol}= 10^{-10}$ on the Web graph. }
\label{fig:google1}
\end{figure}
%
%
From this it is clear that our method adds a significant amount of overhead, especially the parallel one. While all three methods need a longer time if the error tolerance is smaller, our method shows a significantly smaller increase compared to the basic method. While the break even here seems to lie at around $10^{-5}$ for the sequential algorithm and at around $10^{-9}$ for the parallel algorithm because of additional overhead. If the overhead in particular for the parallell algorithm could be further reduced this breakpoint could potentially be significantly earlier. Note that because of limits in machine precision we might not have the correct rank down to the last 20 decimals at the lowest tolerance, however since we sum over successively smaller parts we still get a good approximation of the actual computation time.

The running time when we let $c$ vary between $0.5$ and $0.99$ with a constant error tolerance ($10^{-10}$) can be seen in Fig. \ref{fig:google1b}.
%
%

From Fig. \ref{fig:google1b} there is a clear indication that our proposed method can be significantly faster for values of $c$ close to one. This is quite natural given that computation time of some components does not depend on $c$ at all (CACs and small components). Even for those components that does depend on error tolerance (large SCCs) there should be a significant amount of edges out of the component meaning some rows will have a sum lower than $c$ thus roughly simulating a lower $c$ value. 

In order to get some estimation of how the result changes with the size of the graph we also did the same experiments with ten copies of this graph. The results of this can be seen in Fig. \ref{fig:google2a}.
\begin{figure}[!hbt]
\captionsetup{width=0.8\textwidth}
\centering
\begin{subfigure}[t]{ .4\textwidth}
\centering
\includegraphics[width = 1\textwidth]{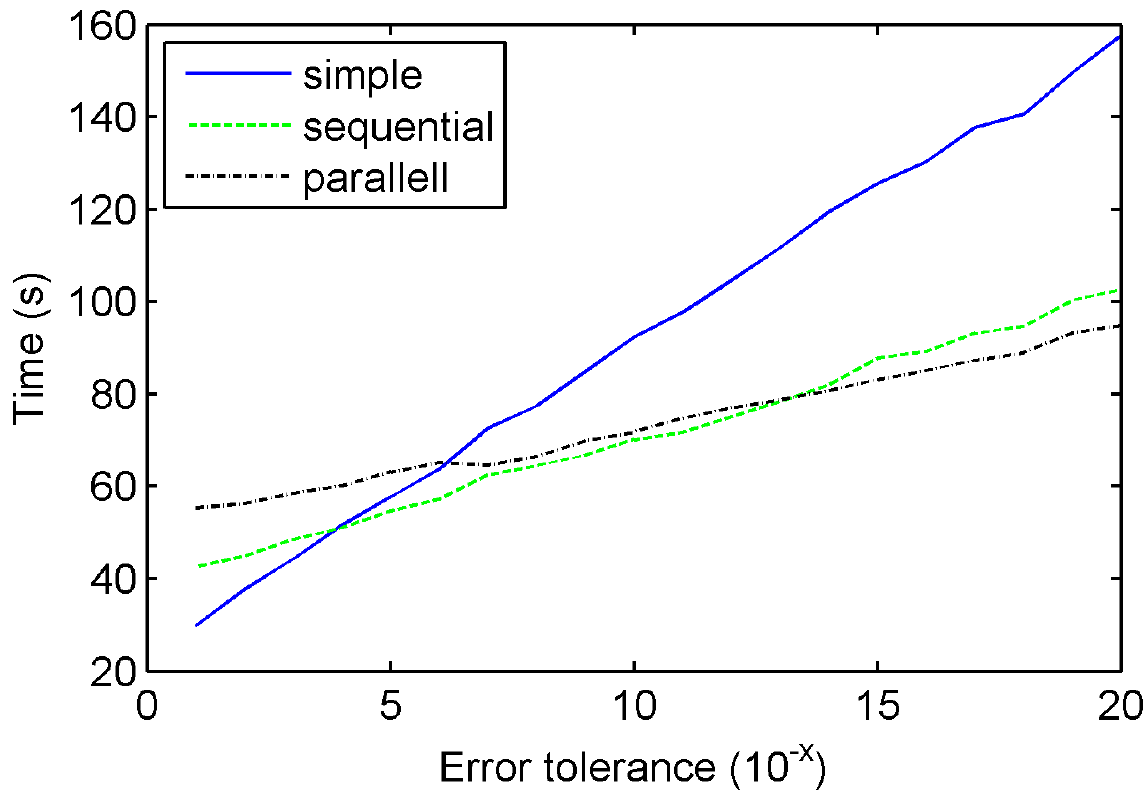}
\caption{} 
\label{fig:google2a}
\end{subfigure}%
~
\begin{subfigure}[t]{ .4\textwidth}
\centering
\includegraphics[width = 1\textwidth]{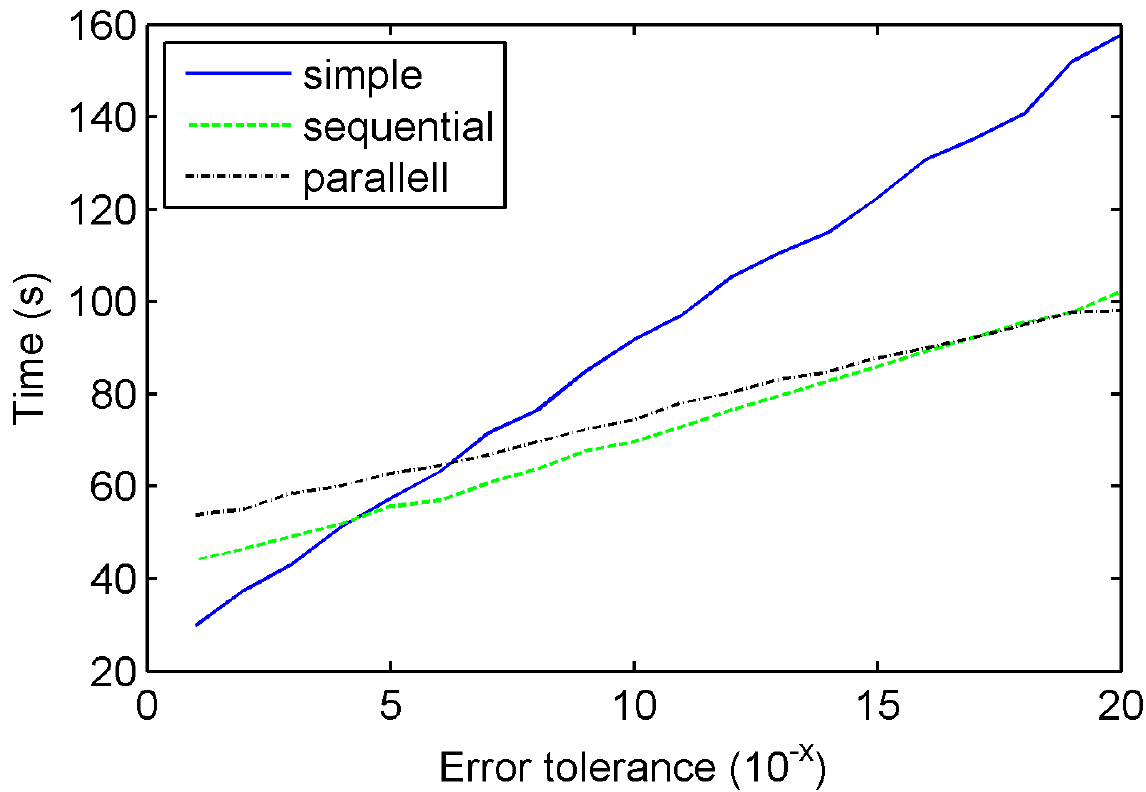}
\caption{} 
\label{fig:google2b}
\end{subfigure}
\caption{Running time needed to calculate PageRank for 3 different methods depending on error tolerance used on (a) ten copies of the Web graph and (b) ten copies of the Web graph with some extra edges added. }
\label{fig:google2}
\end{figure}
%
%
With a ten times larger graph the basic and sequential approach takes about ten times longer (sequential a little less, basic a little more), hence the relationship between the two are more or less the same but with a slightly earlier point at which they have the same performance. The parallel version however only took approximately eight times longer on the much larger graph presumably since we get many more components on the same level. This moves the breakpoint from $10^{-10}$ to $10^{-6}$.

While a ten-times copied graph might not represent a true network of that size it does have one important likeness, namely that the diameter of many real world networks (such as the world wide web) increases significantly slower than the size of the network itself. If the diameter is small it is clear that the number of levels need to be small as well, which means that as the size of the network increases we should have more and more vertices on the same level increasing the opportunity for the parallelization. 

By adding a couple of random edges we retain a single large component (as is the case of the original graph), the results after doing this can be seen in Fig. \ref{fig:google2b}.
%
%
We do not see any significant differences between this and the previous graph, the parallel method takes slightly longer while the other two stay more or less the same. From this we see that even if we have a very large component (composed of roughly half the vertices) we can still get significant gains using our method as long as the graph is sufficiently big. 

Last we also did the same experiment on a graph generated by the Barab\'asi-Albert model, the results for this graph can be seen in Fig. \ref{fig:BA}.
\begin{figure}[!hbt]
\captionsetup{width=0.8\textwidth}
\centering
\begin{subfigure}[t]{ .4\textwidth}
\centering
\includegraphics[width = 1\textwidth]{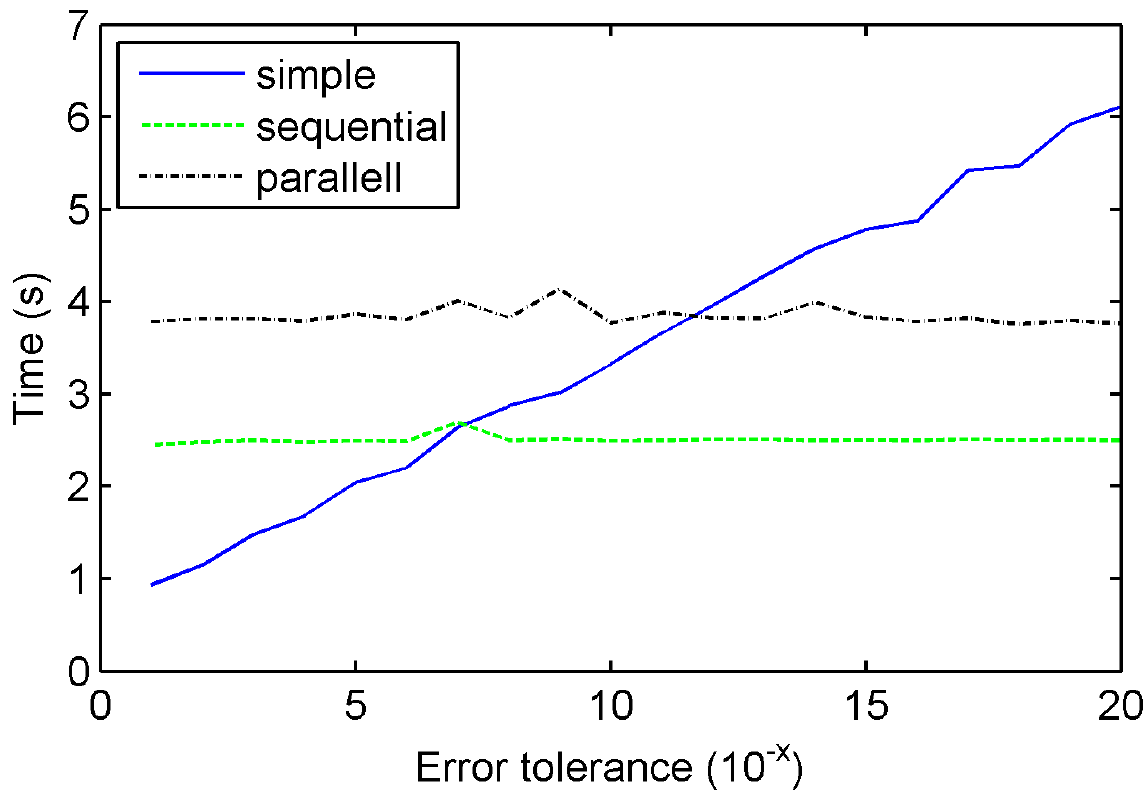}
\caption{} 
\label{fig:BA}
\end{subfigure}%
~
\begin{subfigure}[t]{ .4\textwidth}
\centering
\includegraphics[width = 1\textwidth]{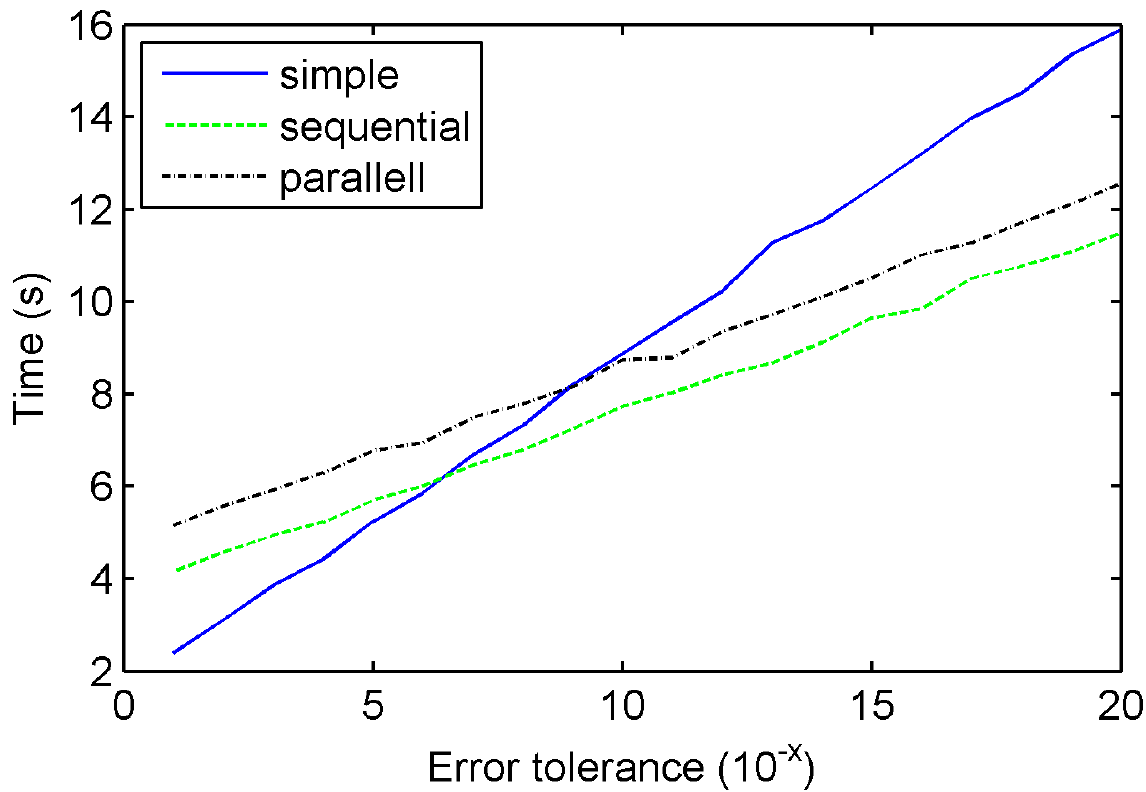}
\caption{} 
\label{fig:BAB}
\end{subfigure}
\caption{Running time needed to calculate PageRank for 3 different methods depending on error tolerance used on the graph generated by the Barab\'asi-Albert graph generation method with (a) roughly 1 edge/vertex after removal of edges and (b) roughly 5 edges/vertex after removal of edges.}
\label{fig:BA}
\end{figure}
%
Since this graph has a very small number of SCCs, both the sequential and parallel method barely depend on the error tolerance at all, while the basic approach seems to have similar behavior as with the previous graphs with a roughly linear increase in time needed as the error tolerance decreases. However since the PageRank calculations themselves are faster for this graph (for all methods), the overhead needed represent a larger part of the total time since this does not decrease in the same way. 
%
Overall the results of our approach are promising, we have gotten better results the bigger the graph is as well as the lower the error tolerance is as compared to the basic approach. Our implementation of the parallel method had significantly more overhead than the sequential method, this is likely to change if the algorithm was implemented in full (with parallelization in mind) in for example c++ where we have more control over how it can be implemented. 

\section{Conclusions}

We have seen that by dividing the graph into components we could get improved performance of the PageRank algorithm, however it does come with a cost of
 increased overhead (very much depending on implementation) as well as algorithm complexity. We could see that we needed to do significantly less number of iterations overall using our method compared to the ordinary method using a power series or power iterations. This came from the fact that some edges lie between components or inside a CAC (hence needing only one iteration) as well as the fact that a larger component is more likely to need a larger number of iterations than a small one. 

From our experiments we can see that our proposed algorithm is more effective compared to the basic approach as the size of the graph increases, at least as long as we can keep everything in memory. More experiments would be needed for any conclusions after that although given that components can be calculated by themselves in our method (hence a lower memory requirement in the PageRank step) we expect our method to compare even better when this is the case. We could also see better performance using our method compared to the basic method if the error tolerance is small while it is generally slower because of overhead if the tolerance for errors is large. 

The results for the parallel method compared to the serial method was not clear, however this is likely something that can be improved significantly using a better implementation and memory handling as well as having more of an advantage when memory overall is more of a concern for even larger graphs. 

By calculating PageRank for different kinds of components differently we could see a large improvement for certain types of graphs such as the one generated by the B-A model where the time needed to calculate PageRank was more or less constant depending on error tolerance using our method. 

\section{Future work}
One obvious and likely next step would be to implement and try the same component finding algorithm for something else such as sparse equation solving or calculating the inverse of sparse matrices, both of which have similar behavior in that components only affect other components downwards in levels. 

It would also be interesting to try an improved implementation of the method by implementing it all in c++ or another `fast' programming language in order to reduce some of the extra overhead incurred. Similarly it would also be interesting to see how the algorithm compares to other algorithms for extremely large graphs where memory become more of an issue. 

A third interesting direction would be to look at how this method interacts with other methods to calculate PageRank by using a method such as the one proposed by \cite{Kamvar200451} on the largest components. 

Another important problem is how to update PageRank after doing some changes to the graph, it would be very interesting to see how this partitioning of the graph in combination with non-normalized PageRank could potentially be used to recalculate PageRank faster than for example using the old PageRank as initial rank and doing power iterations from there. 

\bibliographystyle{plain}
\bibliography{ce-bib}

\begin{thebibliography}{10}

\bibitem{google_contest}
Google web graph.
\newblock Google programming contest,
  \url{http://snap.stanford.edu/data/web-Google.html}, 2002.

\bibitem{FAndersson_art_PR}
Fredrik Andersson and Sergei Silvestrov.
\newblock The mathematics of internet search engines.
\newblock {\em Acta Appl. Math.}, 104:211--242, 2008.

\bibitem{Arasu_et_at}
Arvind Arasu, Jasmine Novak, Andrew Tomkins, and John Tomlin.
\newblock Pagerank computation and the structure of the web: Experiments and
  algorithms.
\newblock In {\em In Proceedings of the Eleventh International Conference on
  World Wide Web, Alternate Poster Tracks.}, 2002.

\bibitem{Barabasi15101999}
Albert-Laszlo Barabasi and Reka Albert.
\newblock Emergence of scaling in random networks.
\newblock {\em Science}, 286(5439):509--512, 1999.

\bibitem{Becchetti06thedistribution}
Luca Becchetti and Carlos Castillo.
\newblock The distribution of pagerank follows a power-law only for particular
  values of the damping factor.
\newblock In {\em In Proceedings of the 15th international conference on World
  Wide Web}, pages 941--942. ACM Press, 2006.

\bibitem{Berkhin05asurvey}
Pavel Berkhin.
\newblock A survey on pagerank computing.
\newblock {\em Internet Mathematics}, 2:73--120, 2005.

\bibitem{tAnatomy-largeWebSearch}
Sergey Brin and Lawrence Page.
\newblock The anatomy of a large-scale hypertextual web search engine.
\newblock In {\em Proceedings of the Seventh International Conference on World
  Wide Web 7}, WWW7, pages 107--117, Amsterdam, The Netherlands, The
  Netherlands, 1998. Elsevier Science Publishers B. V.

\bibitem{ce_ASMDA2015}
Christopher Engstr{\"o}m and Sergei Silvestrov.
\newblock A componentwise pagerank algorithm.
\newblock In {\em Applied Stochastic Models and Data Analysis (ASMDA 2015). The
  16th Conference of the ASMDA International Society}, pages 186--199, 2015.

\bibitem{ce_SMTDA2014}
Christopher Engstr{\"o}m and Sergei Silvestrov.
\newblock Non-normalized pagerank and random walks on n-partite graphs.
\newblock In {\em 3rd Stochastic Modeling Techniques and Data Analysis
  International Conference (SMTDA2014)}, pages 193--202, 2015.

\bibitem{Estrada:2011:SCN:2181136}
Ernesto Estrada.
\newblock {\em The Structure of Complex Networks: Theory and Applications}.
\newblock Oxford University Press, Inc., New York, NY, USA, 2011.

\bibitem{ilprints582}
Taher Haveliwala and Sepandar Kamvar.
\newblock The second eigenvalue of the google matrix.
\newblock Technical Report 2003-20, Stanford InfoLab, 2003.

\bibitem{5399514}
H.~Ishii, R.~Tempo, E.-W. Bai, and F.~Dabbene.
\newblock Distributed randomized pagerank computation based on web aggregation.
\newblock In {\em Decision and Control, 2009 held jointly with the 2009 28th
  Chinese Control Conference. CDC/CCC 2009. Proceedings of the 48th IEEE
  Conference on}, pages 3026--3031, 2009.

\bibitem{ilprints597}
Sepandar Kamvar and Taher Haveliwala.
\newblock The condition number of the pagerank problem.
\newblock Technical Report 2003-36, Stanford InfoLab, June 2003.

\bibitem{Kamvar200451}
Sepandar Kamvar, Taher Haveliwala, and Gene Golub.
\newblock Adaptive methods for the computation of pagerank.
\newblock {\em Linear Algebra and its Applications}, 386(0):51 -- 65, 2004.
\newblock Special Issue on the Conference on the Numerical Solution of Markov
  Chains 2003.

\bibitem{Langville:2005:RPP:1114107.1117871}
Amy~N. Langville and Carl~D. Meyer.
\newblock A reordering for the pagerank problem.
\newblock {\em SIAM J. Sci. Comput.}, 27(6):2112--2120, December 2005.

\bibitem{journals/im/LeeGZ07}
Chris~P. Lee, Gene~H. Golub, and Stefanos~A. Zenios.
\newblock A two-stage algorithm for computing pagerank and multistage
  generalizations.
\newblock {\em Internet Mathematics}, 4(4):299--327, 2007.

\bibitem{snapnets}
Jure Leskovec and Andrej Krevl.
\newblock {SNAP Datasets}: {Stanford} large network dataset collection.
\newblock \url{http://snap.stanford.edu/data}, June 2014.

\bibitem{Tarjan72depthfirst}
Robert Tarjan.
\newblock Depth first search and linear graph algorithms.
\newblock {\em SIAM Journal on Computing}, 1(2):146--160, 1972.

\bibitem{Qing_lumping2}
Qing Yu, Zhengke Miao, Gang Wu, and Yimin Wei.
\newblock Lumping algorithms for computing google's pagerank and its
  derivative, with attention to unreferenced nodes.
\newblock {\em Information Retrieval}, 15(6):503--526, 2012.

\end{thebibliography}

\end{document}